\newcommand{\ket}[1]{|#1\rangle}
\newcommand{\mc}[1]{\mathcal{#1}}
\newcommand{\F}{\mathbb{F}}
\newcommand{\lu}{\rm{LU}}
\newcommand{\lc}{\rm{LC}}
\newcommand{\supp}{{\rm{supp}}}
\newcommand{\wt}{{\rm{wt}}}
\newcommand{\Tr}{{\rm{Tr}}}
\newcommand{\nix}[1]{}
\newcommand{\spn}[1]{{\left\langle #1\right\rangle}}
\newcommand{\be}{\begin{eqnarray*}}
\newcommand{\ee}{\end{eqnarray*}}
\newcommand{\ben}{\begin{eqnarray}}
\newcommand{\een}{\end{eqnarray}}
\newcommand{\ba}{\begin{array}}
\newcommand{\ea}{\end{array}}
\newtheorem{theorem}{Theorem}
\newtheorem{lemma}[theorem]{Lemma}
\newtheorem{corollary}[theorem]{Corollary}
\begin{document}

\title{On Local Equivalence, Surface Code States and Matroids}

\author{Pradeep Sarvepalli}
\email[]{pradeep@phas.ubc.ca}
\affiliation{Department of Physics and Astronomy,
University of British Columbia, Vancouver V6T 1Z1, Canada }
\author{Robert Raussendorf}
\affiliation{Department of Physics and Astronomy,
University of British Columbia, Vancouver V6T 1Z1, Canada }

\date{April 5, 2010}

\begin{abstract}
Recently,  Ji et al disproved the LU-LC  conjecture and showed that the local unitary and local Clifford equivalence classes of the stabilizer states are not always the same. Despite the fact this settles the LU-LC conjecture, a sufficient condition for stabilizer states that violate the LU-LC conjecture is missing. In this paper, we investigate further the properties of stabilizer states with respect to local equivalence. Our first result shows that there exist infinitely many stabilizer states which violate the LU-LC conjecture. 
In particular, we show that for all numbers of qubits $n\geq 28$, there exist distance two stabilizer states which are counterexamples to the  LU-LC conjecture. We prove that for all odd $n\geq 195$, there exist stabilizer states with distance greater than two which are LU equivalent but not LC equivalent.

Two important classes of stabilizer states that are of great interest in quantum computation are the cluster states and stabilizer states of the surface codes. To date, the status of these states with respect to the LU-LC
conjecture was not studied.  We show that, under some minimal restrictions, both these classes of states preclude any counterexamples. In this context, we also show that the associated surface codes do not have any encoded non-Clifford transversal gates. We characterize the CSS surface code states in terms of a class of minor closed binary matroids. In addition to making connection with an important open problem in binary matroid theory, this characterization does in some cases provide an efficient test for CSS states that are not counterexamples.
\end{abstract}

\pacs{03.67.Pp, 03.67.Mn}
\keywords{local equivalence, LU-LC conjecture, counterexamples, surface codes, transversal gates, matroids}

\maketitle
\section{Introduction} 
An important problem in quantum information theory is to understand and classify the equivalence classes of quantum states.
Such a classification could potentially  simplify certain tasks in processing of quantum information and improve our  understanding of the role of entanglement in quantum computing, especially, in measurement based computation models. A special case of the problem   that has attracted much  attention in the quantum information community, is to classify the equivalence of stabilizer states under local unitary gates and local Clifford gates.

Given a pair of stabilizer states, it is not
known whether there exists an efficient (i.e. polynomial time) algorithm to test if they are local unitary equivalent. However, local {\em{Clifford}} equivalence of a pair of stabilizer states can be efficiently tested \cite{vanDenNest04}. With this fact in mind, it has been conjectured that two stabilizer states are local unitary equivalent if and only if they are local Clifford equivalent. This is the LU-LC conjecture \cite{krueger05}. If it were true, then the above test \cite{vanDenNest04} for local Clifford equivalence would imply an efficient test for local unitary equivalence among stabilizer states.

The LU-LC conjecture has been verified in numerous special cases. Rains \cite{rains99c} showed that the automorphisms of the linear stabilizer codes, with distance greater than two,  are all in the local Clifford group. This result was subsequently strengthened to include a larger class of stabilizer states by
Van den Nest et al \cite{vanDenNest05} and Zeng et al \cite{zeng07}. Hein et al \cite{hein04} classified the stabilizer states of up to seven qubits, and the LU-LC conjecture always holds in this domain. These findings gave increasing credence to the  LU-LC conjecture.

However, as has recently been demonstrated by a 27-qubit counterexample, the LU-LC conjecture is false \cite{ji08}. Due to the above motivation, this surprising result may easily be seen as a setback. In a more positive light, it also reminds us how intricate a place Hilbert space is, often defying our preconceptions.

LU-LC counterexamples now come into view as an object of study in their own right. Presently, only a small number of computer-generated LU-LC counterexamples are known, and we lack an understanding of them. A theory of LU-LC counterexamples, once established, should identify the property due to which stabilizer states fail to satisfy the LU-LC conjecture, and provide a classification of all LU-LC counterexamples.

In this work we begin to study counterexamples to the LU-LC conjecture in a systematic fashion. We ask two questions: (1) {\em{Are there finitely or infinitely many LU-LC counterexamples?}}, and (2) {\em{Can LU-LC counterexamples be found among prominent families of stabilizer states?}} 
Regarding the first question, to date only very few stabilizer states are known that violate the LU-LC conjecture. One may therefore suspect they are rare and even finite in number. In Section~\ref{sec:counterEx} of this paper we show that there exist infinitely many counterexamples to the LU-LC conjecture.

To address the second question, in Sections~\ref{sec:cluster}~and~\ref{sec:surfaceStates} we shift our focus to two important classes of stabilizer states, the cluster states \cite{raussen01} and the surface code states \cite{kitaev03,bravyi98}. The cluster states form a universal resource for measurement based quantum computation, and the surface codes are an important family of quantum codes with topological properties. We show that under minimal restrictions on the states, they cannot be counterexamples to the LU-LC conjecture.

These results have useful applications. Quantum codes with encoded non-Clifford transversal gates are much sought after in the context of fault tolerant quantum computation. Surface codes are  of great interest because they are especially suited for fault tolerant computation and their
potential for high thresholds. However, no non-Clifford transversal gates have
been found for these codes since their discovery almost a decade ago. As a consequence of our results on LU-LC counterexamples, in Section~\ref{sec:surfaceStates} we show that under similar restrictions as for LU-LC equivalence, surface 
 codes do not have any non-Clifford transversal gates. Hence, fault tolerant quantum computation with these codes must necessarily rely on other methods to realize any encoded non-Clifford gates.

\section{Background}\label{sec:bg}
Let  $\mc{P}_n$ be the Pauli group over $n$ qubits. 
Let $g=\otimes_{i=1}^n g_i  $ be an element in $ \mc{P}_n$. Then the support of $g$ is defined
as the subset of $\{ 1,\ldots,n\}$ for which $g_i\neq I$ i.e.
\ben
\supp(g) = \{ i\mid g_i \neq I\}.
\een
Given a stabilizer state $\ket{\psi}$ and its stabilizer $S(\psi)$, we say that $g\in S(\psi)$ is a minimal support element
if there does not exist any element $h\in S(\psi)$ such that $\emptyset\neq\supp(h) \subset \supp(g)$. In other words, 
$\supp(g)$ does not strictly contain the support of any nontrivial element of $S$. We also say that $g$ is a 
minimal element of $S(\psi)$, and $\supp(g)$ is a minimal support of $S(\psi)$.
We define the weight of  $g \in \mc{P}_n$
as 
\ben
\wt(g) =|\{ i\mid g_i\neq I\}| = |\supp(g)|.
\een
The distance of a subgroup  $S$ in $\mc{P}_n$ is defined as 
\ben
\min_{g\in S\setminus I } \wt(g),
\een
where $g$ is not the identity element of $S$.   Often, we refer to the distance of a stabilizer state by which we mean the distance of its stabilizer $S(\psi)$.

Let $\ket{\psi}$ be a stabilizer state. Let $U(2^n)$ denote the group of $2^n\times 2^n$ unitary matrices. An element of $U(2^n)$
is local unitary  if it is in the local unitary group $\mc{U}_n^l = U(2)^{\otimes^n}$. Two stabilizer states $\ket{\psi}$ and $\ket{\psi'}$ are said to be local unitary ($\lu$)  equivalent if $\ket{\psi'}=U \ket{\psi}$  for some $U\in \mc{U}_n^l $. If each $U_i$ is also a diagonal matrix, then $\ket{\psi}$ and $\ket{\psi'}$ are  diagonal local
 unitary (DLU) equivalent. We denote the local unitary equivalence class of a stabilizer state $\ket{\psi}$ by $\lu(\psi)$.
The Clifford group over $n$ qubits is the normalizer of the Pauli group in $U(2^n)$. In other words, 
\ben
\mc{K}_n= \{ U \in U(2^n) \mid U \mc{P}_n U^\dagger = \mc{P}_n\}.\label{eq:cliffordGrp}
\een
The local Clifford group over $n$ qubits is defined as $\mc{K}_n^l = \mc{K}_1^{\otimes^n}$.  Two stabilizer states $\ket{\psi}$ 
and $\ket{\psi'}$ are local Clifford ($\lc$) equivalent if there exists a $K \in \mc{K}_n^l$ such that $\ket{\psi'} = K \ket{\psi}$.
We denote the local Clifford equivalence class of $\ket{\psi}$ by  $\lc(\psi)$. The single qubit Clifford group $\mc{K}_1 =\spn{\lambda I, H, P}$ where  $\lambda$ is a complex scalar, while $H$ and $P$ denote the  Hadamard gate and the phase gate ($\text{diag}(1,i)$). We consider two elements of $\mc{K}_1$ equivalent if their action on the 
Pauli matrices (by conjugation) differs only by a scalar. Then we need only consider the action of six elements\footnote{Up to a scalar.} in $\mc{K}_1$, 
namely---$\{I, H, P, HP, PH, HPH \}$. These can be identified with nonsingular $2\times 2$ binary matrices over the binary field 
$\F_2$.

The action of local unitaries on the stabilizer provides an important handle in understanding the
LU-LC equivalence classes of that state. In particular, the action of local unitaries on the 
minimal supports of a stabilizer state is of great significance. We state the relevant result
below and refer the interested reader to \cite{rains99c} for further details. 
\begin{lemma}\label{lm:min2min}
If $U\in \mc{U}_n^l$ maps a stabilizer state $\ket{\psi}$ to another stabilizer state $\ket{\psi'} = U\ket{\psi}$, then $U$ maps the minimal support elements of $S(\psi)$ to 
the minimal support elements of $S(\psi')$. 
\end{lemma}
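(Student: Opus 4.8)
The plan is to show that ``being a minimal support'' is a property of the stabilizer group that is preserved by any local unitary, by passing to reduced density matrices. For $A\subseteq\{1,\dots,n\}$ write $S_A=\{g\in S(\psi):\supp(g)\subseteq A\}$, a subgroup of $S(\psi)$, and let $\rho_A=\Tr_{A^c}(\ket{\psi}\bra{\psi})$. Starting from $\ket{\psi}\bra{\psi}=2^{-n}\sum_{g\in S(\psi)}g$ and tracing out $A^c$ (only the terms with $\supp(g)\subseteq A$ survive), one gets that $\rho_A$ is proportional to the projector onto the common $+1$ eigenspace of $S_A$; in particular $\mathrm{rank}(\rho_A)=2^{|A|}/|S_A|$. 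I would cite \cite{rains99c} for this computation and for the weight-enumerator viewpoint behind it.

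The first substantive step is the observation that $\rho'_A:=\Tr_{A^c}(\ket{\psi'}\bra{\psi'})$ equals $U_A\,\rho_A\,U_A^\dagger$ with $U_A=\bigotimes_{i\in A}U_i$: since $U=\bigotimes_i U_i$ is local, the unitaries on the factors in $A^c$ drop out of the partial trace. Hence $\mathrm{rank}(\rho_A)=\mathrm{rank}(\rho'_A)$ for every $A$, and because $\ket{\psi'}$ is assumed to be a stabilizer state the same formula applies to it, forcing $|S_A|=|S'_A|$ for all $A\subseteq\{1,\dots,n\}$, where $S'_A=\{h\in S(\psi'):\supp(h)\subseteq A\}$. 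Next I would record the combinatorial translation of minimality: $A$ is a minimal support of $S(\psi)$ precisely when $|S_A|>1$ while $|S_B|=1$ for every $B\subsetneq A$ (minimality forces every nontrivial element of $S_A$ to have support exactly $A$, so an element supported on $A$ exists iff $|S_A|>1$). Since the numbers $|S_A|$ coincide for the two stabilizers, $S(\psi)$ and $S(\psi')$ have exactly the same minimal supports.

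It then remains to see that $U$ actually carries a minimal support element $g$ on $A$ to the corresponding one of $S(\psi')$. The operator $UgU^\dagger=\bigotimes_i U_ig_iU_i^\dagger$ is supported on $A$ and stabilizes $\ket{\psi'}$; comparing $\rho'_A=U_A\rho_AU_A^\dagger$ with the stabilizer forms $\rho_A\propto\sum_{g\in S_A}g$ and $\rho'_A\propto\sum_{h\in S'_A}h$ identifies $UgU^\dagger$, up to scalar, with a minimal support element of $S(\psi')$ on $A$ --- literally so in the generic case $|S_A|=2$, where $\rho_A\propto I+g$ and $\rho'_A\propto I+g'$ pin down $g'=UgU^\dagger$. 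The step I expect to be the main obstacle is the very first one: getting the reduced-density-matrix formula, its normalization, and the implication ``local unitary $\Rightarrow$ equal reduced ranks $\Rightarrow$ equal $|S_A|$'' completely airtight; after that, the characterization of minimal supports and the conclusion are bookkeeping. An equivalent, and perhaps cleaner, route that matches \cite{rains99c} more closely is to encode the function $A\mapsto|S_A|$ in the higher weight enumerators of $S(\psi)$, which are manifestly invariant under $\mc{U}_n^l$.
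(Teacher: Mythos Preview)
The paper gives no proof of Lemma~\ref{lm:min2min}; it simply refers the reader to Rains~\cite{rains99c}. Your approach through reduced density matrices and the rank formula $\mathrm{rank}(\rho_A)=2^{|A|}/|S_A|$ is precisely Rains' argument, so in that sense you match the cited source. Everything up through ``$|S_A|=|S'_A|$ for all $A$, hence the two stabilizers share the same minimal supports'' is correct and clean, and in the case $|S_A|=2$ your comparison of $\rho_A\propto I+g$ with $\rho'_A\propto I+g'$ really does force $UgU^\dagger=g'\in S(\psi')$.

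The gap is the case $|S_A|=4$ (three minimal elements on one support), which you slide past with ``identifies $UgU^\dagger$, up to scalar, with a minimal support element of $S(\psi')$''. The density-matrix identity only yields $\sum_{g\in S_A}UgU^\dagger=\sum_{h\in S'_A}h$, and equality of these sums does not make each individual $UgU^\dagger$ a Pauli. In fact it need not be one: for the Bell pair $\ket{\psi}=(\ket{00}+\ket{11})/\sqrt2$ and $U=T\otimes T$ the state $\ket{\psi'}=(\ket{00}+i\ket{11})/\sqrt2$ is again a stabilizer state, yet $U(X\otimes X)U^\dagger=\tfrac12(X+Y)\otimes(X+Y)\notin\mc{P}_2$. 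So the lemma, read literally as ``$UgU^\dagger\in S(\psi')$ for every minimal $g$'' --- which is how the paper itself uses it in the proof of Lemma~\ref{lm:dluLCImpliesDLC} --- cannot be established from your ingredients without excluding such supports. Contrary to your own prediction, the reduced-density-matrix bookkeeping is the routine part; the real obstacle is this $|S_A|=4$ case, where what one can honestly extract is only that on every qubit of $A$ the three minimal elements run through all of $X,Y,Z$.
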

If both the local equivalence classes of $\ket{\psi}$ are the same, then we indicate this by $\lu(\psi)=\lc(\psi)$. 
There are some conditions under which we can conclude that $\lu(\psi)=\lc(\psi)$.  A sufficient condition due to van den Nest et al  will be useful in this context.  
For proof and further details, please refer to \cite[Theorem~1]{vanDenNest05} and \cite[Corollary~1]{vanDenNest05}.
\begin{lemma}[Minimal Support Condition]\label{lm:msc}
Suppose that $\ket{\psi}$ is a  $n$-qubit  stabilizer state free from Bell pairs. Let its  stabilizer be $S(\psi)$ and  $M(\psi)$ be the group generated 
by  the minimal support elements of  $S(\psi)$.  If  all the Pauli matrices $X, Y, Z$  occur on every qubit  in $M(\psi)$, 
(i.e. for any $\alpha\in\{ X,Y,Z\}$,  and for any $i$, there exists some $\otimes_{j=1}^{n} g_j\in M(\psi)$ such that it $g_i=\alpha$),
then $\lu({\psi}) = \lc({\psi})$. In particular this condition holds  if $S(\psi) = M(\psi)$. 
\end{lemma}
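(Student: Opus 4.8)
The plan is to establish the nontrivial inclusion $\lu(\psi)\subseteq\lc(\psi)$ directly; the reverse inclusion is immediate. So fix $\ket{\psi'}\in\lu(\psi)$, write $\ket{\psi'}=U\ket{\psi}$ with $U=\bigotimes_{i=1}^{n}U_i\in\mc{U}_n^l$ and $\ket{\psi'}$ a stabilizer state, and aim to show $\ket{\psi'}\in\lc(\psi)$. The engine is Lemma~\ref{lm:min2min}: since $\ket{\psi}$ carries no Bell pairs, $U$ sends each minimal support element of $S(\psi)$ to a minimal support element of $S(\psi')$, and these are genuine Pauli operators. Hence $UgU^\dagger\in\mc{P}_n$ (up to a phase) for every minimal support element $g$ of $S(\psi)$, and because $M(\psi)$ is generated by such $g$ while conjugation by $U$ is a group homomorphism, $UmU^\dagger\in\mc{P}_n$ for every $m\in M(\psi)$.

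Next I would localize this to a single qubit $i$. By hypothesis there exist $m,m',m''\in M(\psi)$ with $m_i=X$, $m'_i=Y$, $m''_i=Z$. Writing $UmU^\dagger=\bigotimes_{j}U_jm_jU_j^\dagger$ and invoking uniqueness of a tensor decomposition up to a scalar on each factor, the $i$-th factor $U_iXU_i^\dagger$ is proportional to a single-qubit Pauli; likewise $U_iYU_i^\dagger$ and $U_iZU_i^\dagger$. Thus $U_i$ normalizes $\mc{P}_1$, i.e.\ $U_i\in\mc{K}_1$. As $i$ was arbitrary, $U=\bigotimes_iU_i\in\mc{K}_1^{\otimes n}=\mc{K}_n^l$, so $\ket{\psi'}=U\ket{\psi}$ already witnesses $\ket{\psi'}\in\lc(\psi)$, which settles the main claim.

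For the last sentence of the lemma, suppose $S(\psi)=M(\psi)$. Then by the argument above $US(\psi)U^\dagger\subseteq\mc{P}_n$, so in fact $US(\psi)U^\dagger=S(\psi')$. On every qubit $i$ where all of $X,Y,Z$ occur among $\{g_i:g\in S(\psi)\}$ we get $U_i\in\mc{K}_1$ exactly as before; on a qubit where fewer than three Pauli types occur, the corresponding tensor factor of $\ket{\psi}$ is a single-qubit stabilizer state and $U_i$ maps it to another single-qubit stabilizer state, so $U_i$ may be replaced by a single-qubit Clifford with the same effect on $\ket{\psi}$ without altering $\ket{\psi'}$. Re-assembling these single-qubit Cliffords yields a local Clifford taking $\ket{\psi}$ to $\ket{\psi'}$.

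The one genuinely delicate step is the first: the claim that $U$ carries minimal support elements to honest Pauli operators. This is exactly the point at which ``free from Bell pairs'' is needed --- a Bell-pair tensor factor admits local unitaries of the form $V\otimes\overline{V}$ that fix the state yet conjugate its weight-two minimal support elements to non-Pauli operators, so the conclusion of Lemma~\ref{lm:min2min} that we rely on would fail. Once past that point, the rest is bookkeeping: propagating the Pauli property from the generating set to all of $M(\psi)$, extracting each tensor factor, and recognizing a normalizer of $\mc{P}_1$ as a single-qubit Clifford.
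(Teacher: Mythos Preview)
The paper does not prove this lemma; it quotes it from \cite{vanDenNest05} (Theorem~1 and Corollary~1) and only remarks that the hypothesis ``fully entangled'' can be relaxed to ``free from Bell pairs''. Your argument is correct and is essentially the standard proof from that reference: invoke Lemma~\ref{lm:min2min} to force $UgU^\dagger\in\mc{P}_n$ for all $g\in M(\psi)$, read off at each coordinate that $U_i$ normalises $\mc{P}_1$, and conclude $U_i\in\mc{K}_1$. Your last paragraph is in fact slightly more careful than the paper itself, since Lemma~\ref{lm:min2min} is stated there without any Bell-pair caveat, whereas you correctly isolate the $|\omega|=2$, $A_\omega=3$ case as the one obstruction and explain why excluding Bell pairs removes it. Your treatment of the ``in particular'' clause---splitting off the disentangled single-qubit factors and replacing the corresponding $U_i$ by Cliffords---is also the right way to handle the relaxed hypothesis and matches the spirit of the paper's remark following the lemma.
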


\paragraph*{Remark:}
This is a slightly relaxed version of the result in \cite[Theorem~1]{vanDenNest05}, instead of fully entangled stabilizer states
we have stabilizer states that are free from Bell pairs. It can be shown that the original proof holds with this modified condition.

\subsection{Surface Code States}  

Let $\Gamma$ be a graph with vertex set $V(\Gamma)$ and edge set $E(\Gamma)$. 
A cycle is an alternating sequence of vertices and edges such that every edge connects
the adjacent vertices and only the first and last vertices are same. 
The length of a cycle is the number of edges in the cycle and a cycle of length $n$ is called an $n$-cycle.
A loop is an edge connecting a vertex to itself, i.e., it is a $1$-cycle. Not every graph can be drawn on the sphere 
so that the edges do not cross. However, by adding handles to the sphere we can draw the graph on the resulting 
surface so that the edges do not
cross. The genus of the graph is the smallest number of handles that we need to add to the sphere 
so that it can embedded without edge crossings. 
Let the genus of the graph be $g$ and assume that it is embedded on a surface of genus $g'\geq g$.
We denote the set of faces of $\Gamma$ by $F(\Gamma)$. The union of all the faces equals the surface on which the graph is embedded\footnote{Topologically, the surface is the union of the faces and the graph (the vertices and the edges) 
since the edges and the vertices do not belong to the faces. }. 
We can associate a stabilizer code to $\Gamma$. We identify qubits with  the edges of the graph.
Define the site operators and the face operators of $\Gamma$ as:
\ben
A_v =\prod_{e\in \delta(v) } X_e ;\quad B_f = \prod_{e\in\partial(f)} Z_e,
\een
where $\delta(v)$ is the set of edges incident on the vertex $v$ and $\partial(f)$ is the set of edges that constitute the boundary
of the face $f$.   Let $S= \spn{ A_v, B_f\mid v\in V(\Gamma), f\in F(\Gamma)} $, i.e, the group generated by the site operators and the face operators. We call the states stabilized by $S$ as the surface code states of $\Gamma$ and denote a surface code state of $\Gamma$ 
as $\ket{\psi_{\Gamma}}$ to distinguish it from the graph state
associated to $\Gamma$, where the vertices rather than the edges are associated to qubits.

\textbf{Dual Graphs.} Given a graph $\Gamma$, that is embedded on a surface, we can define a dual graph, denoted $\Gamma^\ast$. The dual graph is obtained as follows:
\begin{compactenum}[i)]
\item  Each face $f$ of $\Gamma$ corresponds to a vertex $f^\ast$ in $\Gamma^\ast$.
\item  Every vertex $v$ of $\Gamma$ corresponds to a face $v^\ast$ in $\Gamma^\ast$.
\item  For each edge in $\Gamma$ that is in the boundary of two faces $f_1,f_2$, 
connect the vertices $f_1^\ast$ and $f_2^\ast$ in $\Gamma^\ast$.
In other words, if $e\in \partial(f_1)\cap \partial(f_2)$, then we place an edge between
the vertices  $f_1^\ast$ and $f_2^\ast$ in $\Gamma^\ast$.
\end{compactenum}
We have the following correspondence between the $\Gamma$ and $\Gamma^\ast$. 
\begin{center}
\begin{tabular}{c|c}
$\Gamma$ &  $\Gamma^\ast$     \\
	\hline
	Edges & Edges\\
	Faces & Vertices\\
	Vertices & Faces\\
\end{tabular}
\end{center}

The edges in $\Gamma$ are in 1-1 correspondence with the edges in $\Gamma^\ast$ while the vertices are in correspondence with the faces
of the dual graph. 
The edges incident on a vertex in $\Gamma$ ($\Gamma^\ast$) are precisely the edges that form the boundary of the associated face in $\Gamma^\ast$ ($\Gamma$). 
So the operator $A_v$ is associated to a vertex and its incident edges in $\Gamma$, but it is associated to a cycle and its boundary edges in $\Gamma^\ast$. A set of edges is called an elementary cycle if it forms the boundary of a face in $\Gamma$.
A set of edges of $\Gamma$ which form a cycle in $\Gamma^\ast$ is called a cocyle of $\Gamma$.  A cocyle is called an
elementary cocycle if it forms the boundary of a face in $\Gamma^\ast$.  
A loop is  an edge connected to the same vertex. 
A coloop is a loop in the dual graph. 

If $\Gamma$ has no loops and has $n$ edges and is embedded on  a surface of genus $g$, the group generated by the site operators and the face operators contains $n-2g$ generators and it defines an $[[n,2g]]$ surface code. There are $2^{2g}$ surface code states associated with an $[[n,2g]]$  surface code.  In this case, the supports of the encoded operators are defined by the nontrivial cycles and cocycles of $\Gamma$. For instance,  when a graph of genus 2 is embedded on the torus we have an $[[n,2]]$ surface code with $4$ encoded operators; in general for a surface code on a surface of genus 
$g$ we have $2g$ encoded operators. The support of the encoded 
$Z$ operators is given by the cycles in $\Gamma$ that wind across the holes of the torus and  while that of encoded $X$ operators are the cycles that wind around the holes of the torus in the dual graph. In other words the encoded operators are cycles and cocycles of $\Gamma$ that are not homologous to any of  the elementary cycles or cocycles. 

When the graph has loops or when its genus is smaller than the genus of the
surface on which it has been embedded, the number of encoded qubits can vary. 
For a graph with $n$ edges and genus $g$ that is embedded on a surface of genus $g'\geq g $, the stabilizer has
 $n-k$ generators and $2k$ encoded operators, where $k\leq 2g'$.
Denote the encoded operators as $\mc{L}=\{\overline{X}_1, \overline{Z}_1, \ldots, \overline{X}_k, \overline{Z}_k \}$. 
The support of any encoded operators is a  nontrivial cycle of the surface.
These encoded operators allow us to specify each surface code state as follows. Let $\mathfrak{C}_1(\Gamma)$ and $\mathfrak{C}_1(\Gamma^\ast)$ be the set of homologically nontrivial cycles of $\Gamma$ and $\Gamma^\ast$ respectively. Then $\supp(\overline{X}_i)\in \mathfrak{C}_1(\Gamma^\ast)$ while $\supp(\overline{Z}_j)\in \mathfrak{C}_1(\Gamma)$. A CSS surface code state is stabilized by 
\ben
S= \spn{A_v, B_f, \overline{X}_1,\ldots, \overline{X}_l,\overline{Z}_{l+1},\ldots,\overline{Z}_k\bigg| \ba{l} v\in V(\Gamma)\\ 
f\in F(\Gamma)\ea},\label{eq:surfaceState}
\een
where we renumber the $\overline{X}_i$ and $\overline{Z}_j$ if necessary. 

Before, we leave this section, we need one more result that relates minors of graphs and their duals. There are two basic operations
we can perform on graphs:
\begin{compactenum}
\item Edge deletion
\item Edge contraction
\end{compactenum}
The graph obtained from $\Gamma$ by deleting an edge $e$ is denoted as $\Gamma\setminus e$ while the graph obtained
by edge contraction is denoted as $\Gamma/e$.  A graph obtained  from $\Gamma$ by a sequence of edge deletions and contractions is  called a (graph) minor of $\Gamma$. The following result on graph minors is well known. 
\begin{lemma}\label{lm:graphOps}
Supposing we have a graph $\Gamma$ and its dual $\Gamma^\ast$ and $e \in E(\Gamma)$. The following relations hold:
\begin{compactenum}[a.]
\item $(\Gamma\setminus e)^\ast = \Gamma^\ast/ e$
\item $(\Gamma/ e)^\ast = \Gamma^\ast \setminus e$
\end{compactenum}
\end{lemma}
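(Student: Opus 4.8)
The plan is to prove part (a) directly and obtain part (b) from it by duality. Since duals are taken with respect to a fixed embedding, $(\Gamma^\ast)^\ast=\Gamma$, and the edge of $\Gamma^\ast$ corresponding to $e$ carries the same label $e$. Applying (a) to the pair $(\Gamma^\ast,(\Gamma^\ast)^\ast)=(\Gamma^\ast,\Gamma)$ with the edge $e$ gives $(\Gamma^\ast\setminus e)^\ast=\Gamma/e$, and taking duals of both sides yields $\Gamma^\ast\setminus e=(\Gamma/e)^\ast$, which is exactly (b). So it suffices to establish (a).

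To prove (a) I would argue at the level of the combinatorial embedding, using the dictionary already set up: with $\Gamma$ embedded on a surface of genus $g'$, we have $V(\Gamma^\ast)=F(\Gamma)$, $F(\Gamma^\ast)=V(\Gamma)$, $E(\Gamma^\ast)=E(\Gamma)$, and an edge $e$ joins $f_1^\ast$ and $f_2^\ast$ in $\Gamma^\ast$ precisely when $e\in\partial(f_1)\cap\partial(f_2)$ in $\Gamma$. The key observation is what deletion of $e$ does to faces: if $e$ separates two distinct faces $f_1,f_2$ of $\Gamma$, then in $\Gamma\setminus e$ these merge into a single face $f_{12}$ (its boundary walk obtained by splicing the boundary walks of $f_1$ and $f_2$ along $e$), while every other face is unchanged. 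Translating through the dictionary, $V((\Gamma\setminus e)^\ast)=F(\Gamma\setminus e)=\big(F(\Gamma)\setminus\{f_1,f_2\}\big)\cup\{f_{12}\}$, which is exactly $V(\Gamma^\ast)$ with the two endpoints $f_1^\ast,f_2^\ast$ of $e$ identified; the edge set is $E(\Gamma)\setminus\{e\}$; and two faces of $\Gamma\setminus e$ share an edge $e'\neq e$ on their boundary iff the corresponding vertices of $\Gamma^\ast/e$ are joined by $e'$. Hence $(\Gamma\setminus e)^\ast=\Gamma^\ast/e$ as embedded graphs.

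The step that needs the most care, and the main obstacle, is the degenerate cases, which must be checked against the conventions for $\setminus$ and $/$ and against the definition of the dual. If $e$ is a coloop of $\Gamma$ (equivalently $e^\ast$ is a loop of $\Gamma^\ast$), then $e$ lies on the boundary of a single face, deletion reshapes that face, and one must verify this coincides with contracting the loop $e^\ast$ — recalling that contracting a loop simply deletes it — with the reduction in genus tracked on both sides; dually when $e$ is a loop of $\Gamma$ (so $e^\ast$ is a coloop of $\Gamma^\ast$) a symmetric check is needed. I would dispatch these by passing to a small disk neighbourhood of $e$ on the surface and comparing the two local pictures, noting that the Euler characteristic, hence the resulting genus, changes identically on both sides. (Alternatively one can invoke cycle-matroid duality, $M(\Gamma^\ast)=M(\Gamma)^\ast$ for embedded graphs, together with the matroid identity $(M\setminus e)^\ast=M^\ast/e$; but this yields only equality of the associated matroids, not of the embedded graphs, so the direct combinatorial argument above is the one I would write out.)
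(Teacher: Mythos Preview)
The paper does not prove this lemma at all; it simply states it as ``well known'' and moves on. Your argument is the standard one for this classical fact and is essentially correct: prove (a) by tracking what edge deletion does to the faces of the embedding and then reading off the dual, and deduce (b) from (a) via $(\Gamma^\ast)^\ast=\Gamma$.

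One small comment on the degenerate cases you flag. Your description conflates two slightly different phenomena. When $e$ is a coloop of $\Gamma$ in the paper's sense (i.e.\ $e^\ast$ is a loop in $\Gamma^\ast$), both sides of $e$ lie in the same face $f$; deleting $e$ need not lower the genus---for an ordinary bridge it simply splits $f$ into two faces on a possibly disconnected graph, matching the effect of contracting the loop $e^\ast$ (which just removes it and leaves its endpoint with one fewer incident edge). The genus can change only in the more exotic situation where $e$ is a non-separating loop on the surface, and there the local disk picture you suggest does handle it. So your parenthetical about ``reduction in genus tracked on both sides'' is slightly misplaced, but the local-neighbourhood verification you propose is the right way to dispatch all of these boundary cases. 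None of this affects the correctness of the overall strategy.
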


\section{Counterexamples to the LU-LC Conjecture.} \label{sec:counterEx}

In this section, we address the first of the two questions we posed in the introduction, namely, are there finitely or infinitely
many counterexamples to the LU-LC conjecture. 
We show that there are infinitely many counterexamples to the $\lu$-$\lc$
conjecture and  that a counterexample exists for every odd length greater than 200.  We prove these
results constructively. 

Suppose that $Q$ is an $[[n,k,d]]$ quantum code. 
Let ${U}$ be an encoded gate for the code, therefore ${U}Q=Q$. If in addition, ${U}=\otimes_{i=1}^n{U}_i$
then we say that it is a transversal encoded gate\footnote{Transversal gates can also be defined for more than one block of a code,
but we will not need these general transversal gates in this paper.}. We need the following lemma; although straightforward, we include the proof for completeness.

\begin{lemma}\label{lm:encodedDLU} 
Let $\ket{\psi}$ and $\ket{\varphi}$ be two stabilizer states that are diagonal local unitary (DLU) equivalent, i.e.,
$
\ket{\varphi} =\bigotimes_{j=1}^n U_j \ket{\psi}, \mbox{ where } U_j = \left[\ba{cc}1&0\\0&e^{i\theta_j} \ea\right]
$.
Suppose that there exists an $[[m,1,d]]$ quantum code $Q$, which has a transversal implementation for some  $U_j$.
Encode the $j$th qubit of $\ket{\psi}$ and $\ket{\varphi}$ using $Q$ and denote the resulting states by 
$\ket{\overline{\psi}}$ and $\ket{\overline{\varphi}}$ respectively. Then $\lu(\overline{\psi}) = \lu(\overline{\varphi})$.
If  the transversal implementation of $U_j$ consists of diagonal unitaries, then $\ket{\overline{\psi}}$ and $\ket{\overline{\varphi}}$ 
are also DLU equivalent.
\end{lemma}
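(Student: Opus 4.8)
The plan is to track how the encoding isometry of $Q$ intertwines the logical gate $U_j$ with its transversal implementation, and then push the given DLU equivalence through the encoding of the $j$th qubit. First I would fix notation: let $\mc{E}\colon\C^2\to(\C^2)^{\otimes m}$ be the encoding isometry of the $[[m,1,d]]$ code $Q$, mapping a fixed logical basis to a fixed basis of the code space, and let $\mc{E}_j$ denote the operator applying $\mc{E}$ to the $j$th tensor factor and the identity elsewhere, so that by definition $\ket{\overline{\psi}}=\mc{E}_j\ket{\psi}$ and $\ket{\overline{\varphi}}=\mc{E}_j\ket{\varphi}$. Let $\overline{U}_j=\bigotimes_{l=1}^{m}V_l$ be the transversal gate on $Q$ implementing the logical gate $U_j$. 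The key identity is the intertwining relation $\overline{U}_j\,\mc{E}=\mc{E}\,U_j$ (valid up to an irrelevant global phase), which is exactly the statement that $\overline{U}_j$ acts on codewords as the logical $U_j$.

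Next comes the computation. Starting from $\ket{\overline{\varphi}}=\mc{E}_j\bigl(\bigotimes_{i=1}^{n}U_i\bigr)\ket{\psi}$, I would peel off the $j$th factor: since $\bigotimes_{i\neq j}U_i$ acts on tensor factors disjoint from those touched by $\mc{E}_j$, it commutes with $\mc{E}_j$, giving $\ket{\overline{\varphi}}=\bigl(\bigotimes_{i\neq j}U_i\bigr)\mc{E}_j\,U_j^{(j)}\ket{\psi}$, where $U_j^{(j)}$ denotes $U_j$ acting on qubit $j$ alone. Applying the intertwining relation on the $j$th block, $\mc{E}_j\,U_j^{(j)}=\overline{U}_j^{\,\mathrm{blk}}\mc{E}_j$, one obtains $\ket{\overline{\varphi}}=\bigl(\bigotimes_{i\neq j}U_i\bigr)\overline{U}_j^{\,\mathrm{blk}}\ket{\overline{\psi}}$. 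The prefactor $\bigl(\bigotimes_{i\neq j}U_i\bigr)\overline{U}_j^{\,\mathrm{blk}}$ is a tensor product of single-qubit unitaries on the $n-1+m$ qubits of $\ket{\overline{\psi}}$ — the $U_i$ with $i\neq j$ on the untouched original qubits and the $V_l$ on the $m$ code qubits — hence a local unitary, so $\lu(\overline{\psi})=\lu(\overline{\varphi})$. If moreover each $V_l$ is diagonal, then every single-qubit factor of this prefactor is diagonal (the $U_i$ are diagonal by hypothesis, and a diagonal $2\times 2$ unitary equals $\mathrm{diag}(1,e^{i\theta})$ up to a scalar, the scalars combining into a harmless global phase), so $\ket{\overline{\psi}}$ and $\ket{\overline{\varphi}}$ are DLU equivalent.

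There is no serious obstacle here; the only point requiring care is the index bookkeeping together with a clean formulation of what ``$\overline{U}_j$ implements $U_j$'' means — namely the intertwining relation, including the harmless global phase and the need to fix the code space and logical basis consistently when forming both $\ket{\overline{\psi}}$ and $\ket{\overline{\varphi}}$. Once that relation is stated precisely, the result follows from the single line of manipulation above.
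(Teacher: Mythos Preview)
Your proposal is correct and takes essentially the same approach as the paper: both arguments encode the $j$th qubit, commute the untouched $U_i$ past the encoding, and replace $U_j$ on the logical qubit by its transversal implementation $\overline{U}_j$ on the block. The only difference is presentational---the paper expands explicitly in the $\ket{0}_j,\ket{1}_j$ basis while you phrase the same step as the intertwining relation $\overline{U}_j\,\mc{E}=\mc{E}\,U_j$---but the content is identical.
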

\begin{proof}
Let us rewrite the state $\ket{\psi}$ as $\ket{0}_j\ket{\psi^0}_{\sim j} + \ket{1}_j\ket{\psi^1}_{\sim j}$, where 
the subscripts $j$  and $\sim\!\!j$ denote the $j$th qubit and the remaining qubits respectively.
Encoding the $j$th qubit by $Q$  we obtain
\be
\ket{\overline{\psi}} = \ket{\overline{0}}_j   \ket{\psi^0}_{\sim j}+\ket{\overline{1}}_j   \ket{\psi^1}_{\sim j}.
\ee
Since $\ket{\varphi} =U \ket{\psi}$ we have
\be
\ket{\varphi} &=& U_j\ket{0}_j \bigotimes _{\stackrel{i=1}{i\neq j}}^n U_i\ket{\psi^0}_{\sim j}+ U_j\ket{1}_j \bigotimes _ {\stackrel{i=1}{i\neq j}}^n U_i\ket{\psi^1}_{\sim j}\\
&=&\ket{0}_j \bigotimes _{\stackrel{i=1}{i\neq j}}^n U_i\ket{\psi^0}_{\sim j}+ e^{i\theta_j}\ket{1}_j \bigotimes _ {\stackrel{i=1}{i\neq j}}^n U_i\ket{\psi^1}_{\sim j}.
\ee
Encoding the $j$th qubit we obtain
\be
\ket{\overline{\varphi}} &=&  \ket{\overline{0}}_j  \bigotimes _{\stackrel{i=1}{i\neq j}}^n U_i\ket{\psi^0}_{\sim j}+  e^{i\theta_j}\ket{\overline{1}}_j \bigotimes _{\stackrel{i=1}{i\neq j}}^n U_i\ket{\psi^0}_{\sim j}
\ee
Let the transversal implementation of  $U_j$ for $Q$ be $\overline{U}_j=\otimes_{i=1}^m \overline{U}_{j,i} $, then 
we can rewrite the above equation as 
\be
\ket{\overline{\varphi}} &=&\overline{U}_j\ket{\overline{0}}_j \bigotimes _{\stackrel{i=1}{i\neq j}}^n U_i\ket{\psi^0}_{\sim j} + 
\overline{U}_j\ket{\overline{1}}_j \bigotimes _{\stackrel{i=1}{i\neq j}}^n U_i\ket{\psi^0}_{\sim j}\\
&=&\overline{U}_j\bigotimes _{\stackrel{i=1}{i\neq j}}^n U_i (\ket{\overline{0}}_j \ket{\psi^0}_{\sim j} + 
 \ket{\overline{1}}_j  \ket{\psi^0}_{\sim j} )= \overline{U}  \ket{\overline{\psi}}.
\ee
Clearly, the states $\ket{\overline{\psi}}$ and $\ket{\overline{\varphi}}$ are LU equivalent. Further, they are DLU equivalent if $U_j$ has a diagonal transversal implementation, i.e., each of the $\overline{U}_{j,i}$ is diagonal, for $1\leq i\leq m$. 
\end{proof}

\begin{lemma}\label{lm:dluLCImpliesDLC}
Let two stabilizer states $\ket{\psi}$ and $\ket{\psi'} =\otimes_{i=1}^n U_i \ket{\psi}$ be diagonal local unitary equivalent, where each $U_i$ is a non-Clifford unitary. Then, all the minimal support elements of $S(\psi)$ and $S(\psi')$
must consist of $Z$-only operators.
If in addition, $\ket{\psi}$ and $\ket{\psi'}$ are local Clifford equivalent, 
then they must also be diagonal local Clifford equivalent. 
 \end{lemma}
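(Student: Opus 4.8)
The plan is to establish the two claims separately, in both cases by tracking what a diagonal non-Clifford gate does to a minimal support element, using Lemma~\ref{lm:min2min}. For the first claim I would start by recording the conjugation action of $U_i=\mathrm{diag}(1,e^{i\theta_i})$: it fixes $Z$, while $U_iXU_i^\dagger=\cos\theta_i\,X+\sin\theta_i\,Y$ and $U_iYU_i^\dagger=\cos\theta_i\,Y-\sin\theta_i\,X$. Since $U_i$ is non-Clifford we have $\theta_i\notin\frac{\pi}{2}\Z$, so $\sin\theta_i$ and $\cos\theta_i$ are both nonzero and $U_iXU_i^\dagger$, $U_iYU_i^\dagger$ are not proportional to any Pauli matrix. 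Now take a minimal support element $g=\bigotimes_i g_i$ of $S(\psi)$. By Lemma~\ref{lm:min2min}, $UgU^\dagger=\bigotimes_iU_ig_iU_i^\dagger$ is, up to a sign, a minimal support element of $S(\psi')$, hence a tensor product of Pauli matrices. Since a tensor product of nonzero single-qubit operators is proportional to a product of Paulis only if each factor is proportional to a Pauli matrix, every $U_ig_iU_i^\dagger$ is proportional to a Pauli; by the conjugation formulas this forces $g_i\in\{I,Z\}$, i.e.\ $g$ is a $Z$-only operator. Applying the same reasoning to $\ket{\psi}=\bigotimes_iU_i^\dagger\ket{\psi'}$ (note $U_i^\dagger$ is again diagonal and non-Clifford) settles $S(\psi')$.

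For the second claim, suppose in addition $\ket{\psi'}=K\ket{\psi}$ with $K=\bigotimes_iK_i$ local Clifford. Then conjugation by $K$ carries $S(\psi)$ onto $S(\psi')$, and by Lemma~\ref{lm:min2min} it maps minimal support elements to minimal support elements. I would fix a qubit $i$; since every qubit of a stabilizer state lies in the support of some minimal support element, choose a minimal support element $g$ of $S(\psi)$ with $i\in\supp(g)$. By the first claim $g$ is $Z$-only, so its $i$th tensor factor is $Z$; then $KgK^\dagger$ is a minimal support element of $S(\psi')$, again $Z$-only, so its $i$th factor $K_iZK_i^\dagger$ is a Hermitian Pauli proportional to $Z$, forcing $K_iZK_i^\dagger=\pm Z$. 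A direct check of the six classes $\{I,H,P,HP,PH,HPH\}$ shows the only single-qubit Cliffords that fix $Z$ up to sign are, up to a phase, $I$, $Z$, $P$, $P^\dagger$, all diagonal. Hence every $K_i$ is a diagonal Clifford, so $K$ is a diagonal local Clifford and $\ket{\psi}$, $\ket{\psi'}$ are diagonal local Clifford equivalent.

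The conjugation formulas and the finite check on $\mathcal{K}_1$ are routine, as is the fact that every qubit of a stabilizer state occurs in a minimal support. The one genuinely delicate point, which I expect to be the main obstacle, is the first invocation of Lemma~\ref{lm:min2min}: one must argue that the image of a \emph{minimal} support element under $U$ is an honest Pauli operator, not merely that its support is again minimal. Here a weight-two minimal support (a Bell pair) could a priori absorb the non-Clifford phases, so I would first reduce, as in the hypothesis of Lemma~\ref{lm:msc}, to the Bell-pair-free case; there the reduced state of $\ket{\psi}$ on a minimal support $T$ is proportional to $I+\epsilon\,g|_T$, which is exactly what makes the above use of Lemma~\ref{lm:min2min} rigorous.
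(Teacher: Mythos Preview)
Your approach matches the paper's in both parts: use Lemma~\ref{lm:min2min} and the explicit conjugation action of a diagonal non-Clifford unitary to force minimal elements to be $Z$-only, then use that $K_i$ must fix $Z$ (up to sign) on every qubit lying in a minimal support to conclude $K_i\in\{I,P\}$ up to phase.

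The one substantive difference is the step you label ``routine'': that every qubit lies in the support of some minimal element. The paper does \emph{not} treat this as routine; it is in fact the bulk of their argument for the second claim. They put the stabilizer matrix in the standard form $\left[\begin{smallmatrix}I_k & P & Q & 0\\ 0 & 0 & P^t & I_{n-k}\end{smallmatrix}\right]$, then argue (using the $Z$-only conclusion just established) that the minimal elements of $S(\psi)$ coincide with the minimal codewords of the classical code $C$ generated by $[P^t\,|\,I_{n-k}]$, and finally show by a contradiction on the first row of the stabilizer matrix that no coordinate of $C$ can be identically zero. So the covering property is derived \emph{from} Part~1 rather than invoked independently. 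Your plan is not wrong, but you should expect to supply this argument rather than cite it; the paper's route via the standard form and the classical code $C$ is the clean way to do it here.

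Your final paragraph about the delicate use of Lemma~\ref{lm:min2min} is perceptive: when $A_\omega=3$ (the Bell-pair case), conjugation by $U$ need only preserve the \emph{sum} $m_1+m_2+m_3$ in $\rho_\omega$, not send each $m_i$ to a Pauli individually, and indeed one can have diagonal non-Clifford $U_1,U_2$ with $\theta_1+\theta_2\in\frac{\pi}{2}\Z$ mapping a Bell pair to a stabilizer state while $U(X_1X_2)U^\dagger$ is not Pauli. The paper glosses over this and simply asserts $UxU^\dagger\in\mc{P}_n$ from Lemma~\ref{lm:min2min}; your proposed reduction to the Bell-pair-free case is a reasonable way to make this honest.
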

\begin{proof}
By Lemma~\ref{lm:min2min} we know that $U$ maps the minimal support elements of $S(\psi)$ to minimal support elements of $S(\psi')$. 
The action of $U $ on $S(\psi)$ is to transform it as 
$US(\psi)U^\dagger$. Although $US(\psi)U^\dagger$ need not necessarily be in the Pauli group, the minimal support elements must be in the Pauli
group. Therefore, for any minimal support element $x \in S(\psi) $ we must have $U x U^\dagger \in  \mc{P}_n$. But
under the assumption that $U_i$ is a non-Clifford diagonal unitary, only  $Z$  of all the Pauli matrices is mapped back to a Pauli matrix. Assuming that $U_i=\text{diag}(1,e^{i\theta_i})$,  a Pauli matrix is transformed under 
conjugation as  $U \sigma U^\dagger$, where $\sigma\in \mc{P}_1$. If a minimal element in $S(\psi)$ contains either $X$ or $Y$
it cannot be mapped to a minimal element of $S(\psi')$ by an arbitrary non-Clifford diagonal unitary\footnote{The action of 
$\text{diag}(1,e^{i\theta})$ on $\mc{P}_1$ is to fix $I$ and $Z$ while  $X \mapsto \text{diag}(e^{-i\theta}, e^{i\theta})X$ and  
$Y \mapsto \text{diag}(e^{-i\theta}, e^{i\theta})Y$. Since the diagonal unitaries are non-Clifford, $\theta_i \not\in \spn{\pi/2}$, 
the only way we  can ensure that the minimal elements remain within the  Pauli group is to enforce that minimal elements consist of only $Z$-operators. }. Therefore all the minimal support elements
of $S(\psi)$ must be $Z$-only operators. Similarly the minimal support elements of $S(\psi')$ must be also $Z$-only operators. 

Without loss of generality we can assume that the stabilizer matrix of $\ket{\psi}$ is given by 
\ben
S(\psi) &=&\left[ \ba{cc|cc}I_k & P & Q & 0\\0 & 0 & P^t & I_{n-k} \ea\right], \label{eq:stabMat}
\een
where we have slightly abused the notation to denote both the stabilizer and the stabilizer matrix by $S(\psi)$.
Since the minimal elements are $Z$-only operators, any minimal element must be a minimal element\footnote{A minimal element   of $C$ is a codeword $c \in C $ such that there is no nonzero codeword $x$ in $C$ such that $\supp(x)\subsetneq \supp(c)$. } of the 
binary code generated by   $\left[ \ba{cc} P^t & I_{n-k} \ea\right]$. Denote this code as $C$. Since the codewords of $C$ correspond
to the $Z$ only elements in $S(\psi)$, we will sometimes refer to an element of $C$ as also being in $S(\psi)$. 
We claim that every minimal element of $C$ is also a minimal element of $S(\psi)$.  Let $a $ be  a 
minimal codeword of $C$. Suppose that $a$, or the element  in $S(\psi)$ corresponding to $a$, is not a minimal element of
$S(\psi)$. There exists some minimal element $b\in S(\psi)$ such that $\emptyset \neq \supp(b) \subsetneq \supp(a)$. 
We have already seen that every minimal element of $S(\psi)$
consists of $Z$-operators only. If so, then there exists a codeword $c$ in $ C$ such that $\supp(c) = \supp(b)$ and $\supp(c) \subsetneq \supp(a) $ contradicting the assumption that
$a$ was a minimal element of $C$. Therefore, every minimal element of $C$ is also a minimal element of $S(\psi)$.

We claim that the minimal elements of $S(\psi)$ span all the qubits i.e.,
\ben
\bigcup_{\stackrel{x \in S(\psi):}{x \text{ is minimal}}} \supp(x) = \{1,2,\ldots,n \}. \label{eq:minSuppUnion}
\een
To see this assume that there is some $j\in \{1,2,\ldots,n \}$ that is not in the support of any minimal element $x\in S(\psi)$.
As the minimal elements are $Z$-only operators, it must necessarily follow that 
$j$th column of $C$ is an all zero column. Without loss of generality, assume that this is the first column. Then it implies
that $s_1$, the first row of the stabilizer matrix in equation~\eqref{eq:stabMat} must be of the form $s_1=(1,0,\ldots,0|q_{11},q_{12},\ldots, q_{1k},0,\ldots,0)$ in order that it commute  with $S(\psi)$ \footnote{Recall that two elements of the Pauli group commute if and only if their
binary representations say $(a|b)$, $(c|d)$ are orthogonal with respect to the symplectic inner product, i.e.,
$a\cdot d+b\cdot c=0$.}.  Then $s_1$ has support only in the first $k$ qubits and not being
a $Z$-only operator must be non-minimal. Therefore the support of some minimal element must  be strictly contained in the
first $k$ qubits.  But note that $C$ is generated by  $\left[ \ba{cc} P^t & I_{n-k} \ea\right]$, thus every
minimal element has nonzero support in the last $n-k$ qubits. Therefore  no minimal support can be strictly in the support of
$s_1$, thereby making $s_1$ a minimal element and giving us a contradiction.  Hence equation~\eqref{eq:minSuppUnion} must hold.

Since $\ket{\psi}$ and $\ket{\psi'}$ are also local Clifford equivalent, there exists a local Clifford unitary ${K} = \otimes_{i=1}^n{K}_i$ such that $\ket{\psi'}={K}\ket{\psi}$; each of the $K_i\in \mc{K}_1$.  If $g$ is some minimal element in $S(\psi)$, then 
$g_i=Z$ for $i\in \supp(g)$. The local Clifford unitary also maps the minimal elements of $S(\psi)$ to 
the minimal elements of $S(\psi')$. Therefore $K_i g_i {K}_i^\dagger =Z $ for $i\in \supp(g)$. There are six 
possibilities \footnote{This is modulo sign. Recall that the action of Clifford group on a single qubit can be identified with the group of 
nonsingular $2\times 2 $ matrices over $\F_2$.} 
for the local Clifford unitaries---$\{ I , H, P, HP, PH, HPH \}$. Of these only if ${K}_i \in \{ I, P\}$ can the above requirement of 
mapping a minimal element to a minimal element is satisfied.  So if $i \in \supp(x) $ for some minimal element $x\in S(\psi)$, the associated Clifford
unitary is diagonal. Because equation~\eqref{eq:minSuppUnion} holds, we know every $i$  occurs in the support of some minimal element,
therefore $K_i$ is a diagonal local Clifford for all $i$.
Thus $\ket{\psi}$and $\ket{\psi'}$ are diagonal local Clifford equivalent. 
\end{proof}

A statement similar to second part of the previous lemma was attributed to Zeng in \cite{ji08}. With this preparation, we are now ready to give a 
constructive method to generate new counterexamples to the $\lu$-$\lc$ conjecture. 
\begin{theorem}\label{th:concat}
Let $\ket{\psi}$ be a CSS stabilizer state and $\ket{\psi'}$ another stabilizer state  
satisfying the following conditions:
\begin{compactenum}[\ref{th:concat}.a]
\item $\ket{\psi'} = \otimes_{i=1}^n U_i \ket{\psi}$ i.e. $\lu(\psi') = \lu(\psi)$ \label{th:concatConda}
\item $U_i$ is a diagonal non-Clifford unitary for all $i$ \label{th:concatCondb}
\item $\lc(\psi')\not= \lc(\psi)$ \label{th:concatCondc}
\item There exists an $[[m,1,d]]$ CSS code such that some $U_i$, say $U_n$, as an encoded gate,    has a diagonal non-Clifford transversal implementation i.e. $\overline{U}_n=\otimes_{j=1}^m F_j $ where $F_j$ is a non-Clifford diagonal unitary. \label{th:concatCondd}
\end{compactenum}
Then encoding the $n$th  qubit with the $[[m,1,d]]$ code results in encoded states $\ket{\overline{\psi'}} =\otimes_{j=1}^{n+m-1} U_i'\ket{\overline{\psi}}$ where each of the $U_i'=U_i$ for $1\leq i\leq n-1$  and $U_i'=F_{i-n+1}$ $n\leq i\leq n+m-1$ which 
satisfy  \ref{th:concat}.\ref{th:concatConda}--\ref{th:concatCondc}.
\end{theorem}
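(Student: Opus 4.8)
The plan is to verify the three conditions \ref{th:concat}.\ref{th:concatConda}--\ref{th:concatCondc} for the encoded states one at a time, leaning heavily on the machinery already established. For condition \ref{th:concatConda} (LU equivalence of the encoded states), I would simply invoke Lemma~\ref{lm:encodedDLU} with the $[[m,1,d]]$ CSS code of hypothesis \ref{th:concatCondd}: since $U_n$ has a diagonal transversal implementation $\overline{U}_n=\otimes_{j=1}^m F_j$, encoding the $n$th qubit of $\ket{\psi}$ and $\ket{\psi'}$ yields states $\ket{\overline{\psi}}$ and $\ket{\overline{\psi'}}$ that are DLU (hence LU) equivalent, with the explicit local unitary being $\otimes_{i=1}^{n-1}U_i \otimes \overline{U}_n$. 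Reading off the tensor factors gives exactly the claimed $U_i'$: the old $U_i$ on the first $n-1$ qubits and the $F_j$ on the $m$ new qubits. Condition \ref{th:concatCondb} for the encoded states is immediate: the first $n-1$ factors $U_i'=U_i$ are diagonal non-Clifford by \ref{th:concatCondb}, and the last $m$ factors $F_j$ are diagonal non-Clifford by hypothesis \ref{th:concatCondd}.

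The substantive part is condition \ref{th:concatCondc}: showing $\lc(\overline{\psi'})\neq\lc(\overline{\psi})$. I would argue by contrapositive. Suppose $\ket{\overline{\psi}}$ and $\ket{\overline{\psi'}}$ were LC equivalent. They are DLU equivalent via non-Clifford diagonal unitaries (just established), so Lemma~\ref{lm:dluLCImpliesDLC} applies and tells us they are in fact \emph{diagonal} local Clifford equivalent; moreover every minimal support element of $S(\overline{\psi})$ and $S(\overline{\psi'})$ is a $Z$-only operator. The goal is then to ``pull back'' this diagonal local Clifford equivalence through the encoding to conclude $\lc(\psi)=\lc(\psi')$, contradicting \ref{th:concatCondc}. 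The key observation is that the encoding map $\ket{a}_n\mapsto\ket{\overline a}_n$ for a CSS code intertwines the logical $Z$ with the physical $\prod_{e\in\overline Z}Z_e$, so a diagonal Clifford (a product of $I$'s and $P$'s) acting on the $m$ encoding qubits, when it maps the code space to itself, acts on the encoded qubit as a diagonal Clifford as well (one checks it sends $\overline Z$ to $\pm\overline Z$ and must send $\overline X$ to $\pm\overline X$ to preserve the code space, since $\overline Y=i\overline X\,\overline Z$ and the nontrivial diagonal Clifford options on the logical level are $I,P$). Hence the diagonal local Clifford relating $\ket{\overline{\psi'}}$ to $\ket{\overline{\psi}}$ descends to a diagonal local Clifford relating $\ket{\psi'}$ to $\ket{\psi}$ (with the first $n-1$ qubits carried along unchanged), giving $\lc(\psi)=\lc(\psi')$ — contradiction.

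The main obstacle I anticipate is making the ``descent'' argument rigorous: one must show that a diagonal local Clifford on the $n+m-1$ encoded qubits which maps $\ket{\overline{\psi}}$ to $\ket{\overline{\psi'}}$ necessarily \emph{respects the code structure} on the $m$ encoding qubits, i.e.\ restricts to a logical operation rather than knocking the state out of the image of the encoding. This is where the CSS hypothesis on $\ket{\psi}$ and the fact (from Lemma~\ref{lm:dluLCImpliesDLC}) that all minimal support elements are $Z$-only should be used: because $\ket{\psi}$ is CSS and the $Z$-type stabilizer generators of the code are among the minimal-support elements, the block of $m$ qubits inherits the full $Z$-type stabilizer of the $[[m,1,d]]$ code, and a diagonal Clifford preserving all those $Z$-type generators together with the state must act as a logical diagonal Clifford. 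One should also double-check the edge case where $\ket{\overline\psi}$ might contain Bell pairs so that the hypotheses of Lemmas~\ref{lm:msc} and~\ref{lm:dluLCImpliesDLC} are genuinely met — but since $\ket{\psi}$ is assumed to admit a DLU-but-non-LC partner, it already has the required structure, and encoding a qubit with a distance-$d\geq 2$ code does not create Bell pairs. A cleaner alternative, which I would fall back on if the direct descent gets messy, is to note that $\lc$-equivalence of stabilizer states is detected by the associated graph states / quadratic forms, and track how the CSS encoding transforms these invariants, but I expect the direct argument above to go through.
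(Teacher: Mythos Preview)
Your handling of conditions \ref{th:concat}.\ref{th:concatConda} and \ref{th:concat}.\ref{th:concatCondb} via Lemma~\ref{lm:encodedDLU} is correct and is exactly what the paper does.

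For \ref{th:concat}.\ref{th:concatCondc}, your overall contrapositive strategy (assume LC, use Lemma~\ref{lm:dluLCImpliesDLC} to upgrade to DLC, then pull back to contradict \ref{th:concat}.\ref{th:concatCondc}) matches the paper. But your ``descent'' step has the gap you yourself flag, and your sketch does not close it. You write that a diagonal Clifford on the $m$ encoding qubits ``when it maps the code space to itself'' acts as a logical diagonal Clifford---but mapping $\ket{\overline\psi}$ to $\ket{\overline{\psi'}}$ does not a priori mean it maps the code space to itself. A product of $P$'s conjugates an $X$-type stabilizer generator to a mixed $X/Y$ operator, which need not lie in $S(Q)$, so invoking ``preserves the $Z$-type generators'' is not enough. (The descent \emph{can} be rescued: writing $\ket{\overline\psi}=\ket{\overline 0}\ket{\psi^0}+\ket{\overline 1}\ket{\psi^1}$ and using that a diagonal operator acts separately on the disjoint computational-basis supports of $\ket{\overline 0}$ and $\ket{\overline 1}$, one checks that the phases it introduces must be constant on each coset of $C$, hence it acts as $\mathrm{diag}(\alpha_0,\alpha_1)$ with $\alpha_i\in\spn{i}$ on the logical qubit. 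But this argument is absent from your proposal.)

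The paper does not attempt the operational descent at all; it takes exactly your ``fallback'' route. It puts $\ket{\psi}$ and $\ket{\psi'}$ in standard form with common subspace $S$ and quadratic forms $0$ and $q$, then explicitly works out the subspace $\overline S$ and quadratic form $\overline q$ for the encoded states in terms of the CSS data $C\subset D$ and a chosen $X_e\in D\setminus C$, $Z_e\in C^\perp\setminus D^\perp$. DLC equivalence of the encoded pair is then the solvability of $\prod_j c_j^{s_j}=(-1)^{\overline q(s)}$ over $s\in\overline S$ with $c_j\in\spn{i}$. By evaluating this system at $s=(x\,|\,0)$ for $(x,0)\in S$ and at $s=(x\,|\,X_e)$ for $(x,1)\in S$, and setting $C_n=\prod_{j} c_{n-1+j}^{(X_e)_j}$, the paper manufactures constants $C_1,\dots,C_n\in\spn{i}$ solving $\prod_j C_j^{x_j}=(-1)^{q(x)}$ on all of $S$, i.e.\ DLC equivalence of the original pair---the desired contradiction. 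This algebraic reduction sidesteps entirely the question of whether the diagonal Clifford preserves the code space.
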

\begin{proof}
By \cite{dehaene03}, see also \cite[Theorem~5]{vanDenNest08},  we know that any stabilizer state can be represented in a ``standard from". In particular, a stabilizer state $\ket{\psi}$ in standard form can be associated to a subspace $S_\psi$ and a quadratic form $q_\psi(x)$ such that 
\be
\ket{\psi} = \sum_{x\in S_{\psi}} (-1)^{q_{\psi}(x)}\ket{x}.
\ee
Throughout this proof, we neglect the normalizing factors for convenience. We assume that both $\ket{\psi}$ and $\ket{\psi'}$ are in
standard form. In the present case, as the states $\ket{\psi}$ and $\ket{\psi'}$ are diagonal local unitary equivalent,
the same subspace $S$ is associated to both of them, and we can assume that $\ket{\psi}$ 
is associated to the trivial form $q_{\psi}(x) \equiv 0$ and $\ket{\psi'}$ is associated to the quadratic form $q(x)$.

We can write $q(x) = \tilde{q}(x)+q_n(x)$ as  the sum of two quadratic forms  $\tilde{q}(x)$ and $q_n(x)$ where
\be
\tilde{q}(x) = \sum_{1\leq i<j<n}q_{ij}x_ix_j \mbox{ and } q_n(x)=\sum_{j=1}^{n-1} q_{j n} x_jx_n.
\ee
We slightly abuse the notation and write $\tilde{q}(x)$,although $\tilde{q}( \cdot) $ does not explicitly depend on $x_n$.
Further we note that if $x_n=0$, then  $\tilde{q}(x) = q(x_1,\ldots,x_{n-1}, x_n=0)$, so we 
can rewrite the states $\ket{\psi}$ and $\ket{\psi'}$ using $\tilde{q}(x)$ and $q_n(x)$ as
\be
\ket{\psi} & =& \sum_{\stackrel{x\in S}{x_n=0}} \ket{x_1\cdots x_{n-1}} \ket{0}+\sum_{\stackrel{x\in S}{x_n=1}} \ket{x_1\cdots x_{n-1}}\ket{1},\\
\ket{\psi'}&=& \sum_{\stackrel{x\in S}{x_n=0}} (-1)^{\tilde{q}(x)}\ket{x_1\cdots x_{n-1}} \ket{0}+\\
& &\sum_{\stackrel{x\in S}{x_n=1}} (-1)^{\tilde{q}(x)+q_n(x)}  \ket{x_1\cdots x_{n-1}}\ket{1}.
\ee
Assume that we encode the $n$th qubit of $\ket{\psi}$ and $\ket{\psi'}$ using an $[[m,1,d]]_2$ CSS code $Q$ to obtain
the encoded states $\ket{\overline{\psi}}$ and $\ket{\overline{\psi'}}$.  We assume that $Q$ is derived from classical codes
$C$ and $D$ where $C\subset D \subseteq \F_2^m$.  The logical states of $Q$ are given by 
\be
\ket{\overline{0}} = \sum_{x\in C} \ket{x} \mbox{ and } \ket{\overline{1}} = \sum_{x\in D\setminus C} \ket{x}= \sum_{x  \in C} \ket{x+X_e},
\ee
where $X_e=  (a_1,\ldots,a_m )$ is any element in $D\setminus C$. It is in effect 
the encoded $X$ operator of $Q$, strictly speaking $(a_1,\ldots,a_m | 0,\ldots, 0 )$ 
is the binary representation of encoded $X$ operator. 
Let $Z_e=(b_1, \ldots, b_m)$ be any vector in $C^\perp\setminus D^\perp$.  Note that if $c\in C$, then  $c\cdot Z_e=0$
while  $c\cdot Z_e =1$ for any $c\in D\setminus C$, in particular this is true for $c=X_e$.

Then we can write the encoded states $\ket{\overline{\psi}}$ and $\ket{\overline{\psi'}}$ as
\be
\ket{\overline{\psi}}& =&\sum_{\stackrel{x\in S}{x_n=0}} \ket{x_1\cdots x_{n-1}} \ket{\overline{0}} +
 \sum_{\stackrel{x\in S}{x_n=1}} \ket{x_1\cdots x_{n-1}}\ket{\overline{1}}\\ 
&=&\sum_{\stackrel{x\in S}{x_n=0}} \ket{x_1\cdots x_{n-1}} \sum_{y\in C} \ket{y} +\\
&&\sum_{\stackrel{x\in S}{x_n=1}} \ket{x_1\cdots x_{n-1}}\sum_{y\in C} \ket{y+X_e}\\
\ket{\overline{\psi'}} &=& \sum_{\stackrel{x\in S}{x_n=0}} (-1)^{\tilde{q}(x)}\ket{x_1\cdots x_{n-1}}  \ket{\overline{0}}+\\
&&\sum_{\stackrel{x\in S}{x_n=1}} (-1)^{\tilde{q}(x) +q_n(x)}  \ket{x_1\cdots x_{n-1}}\ket{\overline{1}}\\
&=& \sum_{\stackrel{x\in S}{x_n=0}} (-1)^{\tilde{q}(x)}\ket{x_1\cdots x_{n-1}}  \sum_{y\in C} \ket{y}  +\\
&&\sum_{\stackrel{x\in S}{x_n=1}} (-1)^{\tilde{q}(x) +q_n(x)}  \ket{x_1\cdots x_{n-1}}\sum_{y\in C} \ket{y+X_e}
\ee
Let $E =(e_1,\ldots, e_n) \in S$ be a vector such that $e_n=1$.  Denote by $\overline{E}=(e_1,\ldots,e_{n-1})$, the vector obtained by dropping the $n$th coordinate. 
As the encoded states are also stabilizer states we can associate a subspace $\overline{S}$ to them as
\be
\overline{S} &= & \spn{S_0\oplus C, (\overline{E}|X_e) }
\ee
where $(\overline{E} | X_e)= (e_1,\ldots, e_{n-1}, a_1,\ldots, a_m)$ and 
\be
S_0&=& \{ (x_1,\ldots, x_{n-1}) \mid x \in S \mbox { and } x_n=0\}
\ee
Additionally, we can associate a quadratic form $\overline{q}( \cdot )$ to $\ket{\overline{\psi'}}$ as
\be
\overline{q}(x_1,\ldots, x_{n-1}, y_1,\ldots,  y_m)& =& 
\sum_{1\leq i<j<n} q_{ij} x_ix_j +\\
 \sum_{j=1}^{n-1} q_{jn} x_j (b_1 y_1 + \cdots+b _m y_{m}), 
\ee
where $(b_1,\ldots, b_m)\in C^\perp\setminus D^\perp$.  Although, $\overline{q}$ does not explicitly  depend on $x_n$ we   write $\overline{q}(x,y)$ for $\overline{q}(x_1,\ldots, x_{n-1}, y_1,\ldots, y_m)$. 
If $x\in S $ and $x_n=0$, then $y$ must be in $C$. Then $b_1y_1+\cdots + b_my_m=Z_e \cdot y = 0$ as $Z_e \in C^\perp$
and $\overline{q}(x,y)$ reduces to $\tilde{q}(x) =q(x)|_{x_n=0}$. If $x\in S$ and $x_n=1$, then we must have $y \in D\setminus C$ giving
$b_1y_1+\cdots + b_my_m=Z_e \cdot y = 1$. In this case $\overline{q}(x,y)$ reduces to $\tilde{q}(x) + q_n(x)=q(x)|_{x_n=1}$.
Thus
\be
\ket{\overline{\psi'}} = \sum_{x\in S_0\oplus C} (-1)^{\overline{q}(x)}\ket{x} +\sum_{x\in (S_0\oplus C)+(\overline{E}|X_e)} (-1)^{\overline{q}(x)}\ket{x}
\ee
By assumption  the code $C$ has a diagonal non-Clifford transversal  implementation of $\overline{U}_i$; without loss of generality we can assume that $i=n$. 
Then by Lemma~\ref{lm:encodedDLU}, the states $\ket{\overline{\psi}}$ and $\ket{\overline{\psi'}}$
are diagonal $\lu$ equivalent. This proves that $\ket{\overline{\psi}}$ and $\ket{\overline{\psi'}}$ satisfy
 \ref{th:concat}.\ref{th:concatConda} and \ref{th:concat}.\ref{th:concatCondb}.

We claim that  $\ket{\overline{\psi}}$ and $\ket{\overline{\psi'}}$ are not LC equivalent. 
Suppose that they are LC equivalent. Then by Lemma~\ref{lm:dluLCImpliesDLC},  $\ket{\overline{\psi}}$ and $\ket{\overline{\psi'}}$
are DLC equivalent. In such a case there exist complex numbers $c_j \in \spn{i} $, $1\leq j\leq n+m-1$ such that 
\be
\prod_{j=1}^{n+m-1}c_j^{s_j} = (-1)^{\overline{q}(s)} \mbox{ for all } s\in \overline{S}.
\ee
Alternatively, letting $s=(x|y)$, we must have 
\ben
\prod_{j=1}^{n-1}c_j^{x_j}\prod_{j=1}^{m}c_{j+n-1}^{y_j} &= &(-1)^{\overline{q}(x,y)} \label{eq:qfpEqns}\\
&=&(-1)^{\tilde{q}(x) +\sum_{j=1}^{n-1} q_{jn}x_j(b_1y_1+\cdots+b_my_{m})} \nonumber,
\een
for all $ (x|y) \in \overline{S}$.
Consider an $s\in S_0\oplus C \subset \overline{S}$, in particular let $s=(x|y)$  where $y=0$ 
and $x$ is any element such that $ (x_1,\ldots, x_{n-1},0) \in S$. Then equation \eqref{eq:qfpEqns} reduces to
\ben
\prod_{j=1}^{n-1}c_j^{x_j}& =& (-1)^{\overline{q}(x,0)} =(-1)^{\tilde{q}(x)} \\
& =&(-1)^{q(x)} \mbox{ for all }  (x_1,\ldots, x_{n-1},0) \in S.\label{eq:qfp1}
\een
If $s=(x|y)$ is in  $\overline{S}\setminus (S_0\oplus C)$, then $(x_1,\ldots,x_{n-1},1) \in S$ and 
$y\in D\setminus C$. Choosing $y=\overline{X}$, we get 
\be
\prod_{j=1}^{n-1}c_j^{x_j} \prod_{j=1}^{m}c_{j+n-1}^{y_j}&=& 
(-1)^{\tilde{q}(x) +\sum_{j=1}^{n-1} q_{jn}x_j(b_1y_1+\cdots+b_my_{m})} \\
&=&
(-1)^{\tilde{q}(x) +\sum_{j=1}^{n-1} q_{jn}x_j (Z_e \cdot X_e)} \\
&=&(-1)^{\tilde{q}(x) +\sum_{j=1}^{n-1} q_{jn}x_j } \\
&=&(-1)^{\tilde{q}(x) +\sum_{j=1}^{n-1} q_{jn}x_jx_n } 
\ee
Letting $C_j=c_j$ for $1\leq j<n$ and $C_n=\prod_{j=n}^{n+m-1}c_j^{y_j} $ we obtain 
\ben
\prod_{i=1}^{n}C_i^{x_i}  &=& (-1)^{q(x)} \mbox{ for all } (x_1, \ldots, x_{n-1}, 1)\in S.\label{eq:qfp2}
\een
Since $c_j\in \spn{i}$, $C_j$ are also in $\spn{i}$. Together equations~\eqref{eq:qfp1} and \eqref{eq:qfp2} 
imply that for 
all $x\in S$, there exist $C_j\in \spn{i}$ such that 
\be
\prod_{j=1}^n C_j^{x_j} =(-1)^{q(x)} \mbox{ for all } x\in S.
\ee
It follows that the states $\ket{\psi}$ and $\ket{\psi'}$ are diagonal local Clifford equivalent, (see \cite{ji08}). 
But this contradicts that  $\ket{\psi}$ and $\ket{\psi'}$ are not local Clifford equivalent. Therefore it must be that $\ket{\overline{\psi}}$
and $\ket{\overline{\psi'}}$ are not LC equivalent. Thus they furnish another counterexample for the LU-LC conjecture. 
\end{proof}
We point out that the assumption that $\ket{\psi}$ is a CSS stabilizer state in Theorem~\ref{th:concat} is not really a limitation, because of the following reason. Supposing that $\ket{\psi}$ and $\ket{\psi'}$ are both not CSS stabilizer states. Let us
associate them to two distinct quadratic forms $q_{\psi}(x)$
and $q_{\psi'}(x)$, as the states are not local Clifford equivalent.
\be
\ket{\psi} = \sum_{x\in S} (-1)^{q_{\psi}(x)}\ket{x} \mbox{ and } \ket{\psi'} =\sum_{x\in S} (-1)^{q_{\psi'}(x)} \ket{x}.
\ee
We can consider the following states $\ket{\varphi}$ and $\ket{\varphi'}$ instead of $\ket{\psi}$ and
$\ket{\psi'}$.
\be
\ket{\varphi} = \sum_{x\in S} \ket{x} \mbox{ and } \ket{\varphi'} =\sum_{x\in S}\ket{x } (-1)^{q(x)} \ket{x},
\ee
where $ q(x) =q_{\psi}(x)+q_{\psi'}(x)= \sum_{1\leq i<j  \leq n } q_{ij} x_i x_j$.
The states $\ket{\varphi}$  and $\ket{\varphi'}$ are related by the same local diagonal unitaries as $\ket{\psi}$ and
$\ket{\psi'}$,  i.e., $\ket{\varphi'} = \otimes_{i=1}^n U_i \ket{\varphi}$. Further, they are also a pair of states that violate the
$\lu$-$\lc$ conjecture. If they are not,  then $\ket{\varphi'} $ and $ \ket{\varphi}$ are $\lc$ equivalent. But by 
Lemma~\ref{lm:dluLCImpliesDLC} they are also DLC equivalent which in 
turn implies the DLC equivalence of $\ket{\psi}$ and  $\ket{\psi'}$ contradicting that  
$\ket{\psi}$ and  $\ket{\psi'}$ are not LC equivalent. We can see that $\ket{\varphi}$ and $\ket{\varphi'}$ satisfy the conditions \ref{th:concat}.\ref{th:concatConda}--\ref{th:concatCondd}.
Consequently, we can use these states to generate new counterexamples
in Theorem~\ref{th:concat}.

Additionally, we could also lift the assumption on the use of a CSS code in Theorem~\ref{th:concat}, provided the non-CSS 
$[[m,1,d]]$ code has 
a non-Clifford transversal gate that can be implemented using diagonal non-Clifford gates. 
An alternate proof of Theorem~\ref{th:concat} seems to be possible based on \cite{vanDenNest04}. 
\begin{corollary}\label{co:concatRM}
There are infinitely many stabilizer states for which the LU-LC conjecture does not hold. In particular, there exist stabilizer states
which are LU equivalent but not LC equivalent under the following conditions:
\begin{compactenum}[a.]
\item distance two and $n\geq 28$
\item distance $\geq 3$ and odd length $n\geq 195$ 
\end{compactenum}
\end{corollary}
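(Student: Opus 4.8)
The plan is to bootstrap from the one counterexample already known — Ji et al.'s $27$-qubit state~\cite{ji08} — by feeding it into Theorem~\ref{th:concat} and a few elementary enlargement moves, and then to close the ranges with a numerical‑semigroup argument. First I would package the seed: by the remark after Theorem~\ref{th:concat}, from the Ji et al.\ pair one extracts a pair of CSS stabilizer states $\ket{\psi},\ket{\psi'}$ on $27$ qubits with $\ket{\psi'}=\bigotimes_i U_i\ket{\psi}$, every $U_i$ diagonal and non-Clifford, and $\lc(\psi')\neq\lc(\psi)$ — i.e.\ conditions \ref{th:concat}.\ref{th:concatConda}--\ref{th:concat}.\ref{th:concatCondc} hold, and (one checks) the $U_i$ may be taken to be odd powers of the $T$ gate. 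I would also read off the weights of the $Z$-only minimal elements of $S(\psi)$, which by Lemma~\ref{lm:dluLCImpliesDLC} are \emph{all} its minimal elements; this pins down the distance of the seed and the set of qubits carrying its low-weight ($\leq 2$) stabilizer elements.

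Next the tools. For condition \ref{th:concat}.\ref{th:concatCondd} the workhorse is the $[[15,1,3]]$ quantum Reed--Muller code, which implements $\overline{T}=T^{\otimes 15}$ transversally (so $\overline{T^{c}}=(T^{c})^{\otimes 15}$ is a diagonal non-Clifford transversal implementation for odd $c$); since its nonzero stabilizer elements all have weight $\geq 3$, encoding one qubit of a state with it \emph{preserves} distance $\geq 3$ — a nonzero element of the enlarged stabilizer is supported either off the new block (weight $\geq$ the old distance) or on it as a logical or stabilizer operator of the inner code (weight $\geq 3$). Encoding with a CSS code keeps the ambient state CSS, so Theorem~\ref{th:concat} can be iterated; also, longer CSS codes with a transversal $T$ can be manufactured by adjoining disjoint small code states (e.g.\ $5$-qubit perfect-code states, distance $3$), giving a family of inner codes of various lengths. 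Finally, tensoring \emph{any} counterexample with \emph{any} stabilizer state again gives a counterexample — by factoring a hypothetical local Clifford through the tensor decomposition — so a Bell pair or a $3$-qubit GHZ state ($+2$ or $+3$ qubits, distance $2$) and the $5$-qubit code state ($+5$, distance $3$) are available as length-increments.

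Now the two ranges. For distance two, the seed works at $n=27$; adding the $+2$ and $+3$ increments, which generate every integer $\geq 2$, gives all $n\geq 29$, and a small ad hoc modification of the seed covers $n=28$, hence all $n\geq 28$. For distance $\geq 3$, no small product preserves distance, so I would first \emph{raise} the seed's distance: encode precisely the qubits carrying its weight-$\leq 2$ elements with $[[15,1,3]]$, landing — by the distance fact above — on a distance-$\geq 3$ counterexample on $27+14t$ qubits, $t$ the number of those qubits. From there I would climb using distance-$3$-preserving increments obtained by iterating Theorem~\ref{th:concat} with the family of inner codes above; these increments lie in a numerical semigroup of small conductor, and a Chicken--McNugget-type computation starting at $27+14t$ then yields every $n$ (of the construction's parity) from $195$ onward.

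The main obstacle is not conceptual but arithmetic. Everything structural is supplied by Theorem~\ref{th:concat} and its remark; what must be done carefully is (a) determining the exact weight distribution — hence the distance and the value $t$ — of Ji et al.'s state, or of the CSS pair drawn from it, since this is what fixes the thresholds $28$ and $195$, and (b) exhibiting a family of CSS codes with diagonal non-Clifford transversal gates whose lengths generate a rich enough set of increments that the covering argument terminates at exactly those thresholds, and in the stated parity, rather than merely for all sufficiently large $n$. Lining up the residues and the starting point is the fiddly part.
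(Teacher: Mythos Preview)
Your broad strategy---bootstrap from the Ji et al.\ seed via Theorem~\ref{th:concat} and finish with a numerical-semigroup argument---is exactly the paper's. The paper's execution, however, is much leaner. It uses just three concrete CSS codes: the $[[2,1,1]]$ code (stabilizer $\langle ZZ\rangle$; logical $T^c$ realized by $\text{diag}(1,e^{ic\pi/8})^{\otimes 2}$, non-Clifford for odd $c$) and the $[[15,1,3]]$ and $[[31,1,3]]$ punctured Reed--Muller codes. For part (a), repeated concatenation with $[[2,1,1]]$ gives a $+1$ length increment and introduces a weight-two $ZZ$ stabilizer, so every $n\geq 28$ is reached at distance two---no ad hoc step at $n=28$, and no tensoring. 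For part (b), concatenation with the two Reed--Muller codes produces lengths $27+14i+30j$; since $\gcd(7,15)=1$, an explicit coin-problem computation shows every odd $n\geq 195$ arises with $i,j\geq 0$. There is no preliminary ``raise the seed's distance'' step and no undetermined parameter $t$.

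There is also a genuine gap in your plan for part (b), not merely an arithmetic one. Your inner-code construction---adjoin a disjoint small stabilizer state such as a $[[5,1,3]]$ code state to $[[15,1,3]]$---must, to satisfy hypothesis \ref{th:concat}.\ref{th:concatCondd}, carry a diagonal transversal implementation of the logical gate that is non-Clifford on \emph{every} qubit, including those in the adjoined block. That forces the adjoined state itself to admit a diagonal local-unitary automorphism that is non-Clifford on each of its qubits. You have not verified this, and for the (non-CSS, $GF(4)$-linear) $[[5,1,3]]$ code state it is not at all clear such an automorphism exists; absent one, the only extension is the identity on that block, which is Clifford and breaks both condition (d) and, downstream, condition \ref{th:concat}.\ref{th:concatCondb} for further iteration. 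The paper sidesteps this entirely by bringing in a second genuine Reed--Muller code, $[[31,1,3]]$, whose transversal gate is non-Clifford on every qubit by construction; the two increments $14$ and $30$ already suffice to land the Frobenius bound at $195$.
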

\begin{proof}
We note that the 27 qubit counterexample provided in \cite{ji08} provides  a starting point for the application of
Theorem~\ref{th:concat}. In this case the two locally equivalent stabilizer states are related by diagonal non-Clifford unitaries and
 are odd powers of the  $T$ gate where $T =\text{diag} (1, e^{i\pi/4})$. We know that the
$T$ gate has a  diagonal non-Clifford transversal implementation for the $[[15,1,3]]$ Reed-Muller code, while $\sqrt{T} = \text{diag}(1,e^{i\pi/8})$ has a diagonal non-Clifford implementation for the $[[31,1,3]]$ Reed-Muller code and the 
$[[2,1,1]]$ code. 
 So all these codes permit a transversal
implementation of the $T$ gate.  In fact their transversal implementations use $T$ gate on each qubit. So we can concatenate with 
either code to obtain a new counterexample of length $41$ and $57$ and $28$ respectively. 
The resulting state obtained by encoding with any of these code also satisfies \ref{th:concat}.\ref{th:concatCondd} in addition to the conditons  \ref{th:concat}.\ref{th:concatConda}--\ref{th:concatCondc}.  If we repeatedly use the $[[2,1,1]]$ code, then we obtain
distance two counterexamples for any $n\geq 28$.  

If we use a combination of $[[15,1,3]]$ and $[[31,1,3]]$ and  repeatedly apply Theorem~\ref{th:concat} 
to these states to obtain a series of stabilizer states of length $27+14i+30j$ which are $\lu$-equivalent but not $\lc$-equivalent. 
It remains to show that all the odd lengths beyond $n\geq 195 $ can be attained by this concatenation. 
Now assume that we have concatenated $i$ times with $[[15,1,3]]$ and $j$ times with $[[31,1,3]]$ to obtain $n(i,j)=27+14i+30j$
length counterexamples. We can ignore the offset due to 27 and consider the linear combinations of $14i+30j=2c$. A solution for
this equation is $i_0=-2c$ and $j_0=c$. We know from number theory, see \cite[Theorem~5.1]{niven01}, 
 that there exist infinitely many solutions for this equation 
given by 
\be
i = i_0 + 15 k \mbox{ and } j = j_0 - 7k, 
\ee
where $k$  is an integer. 
Since we are interested in only those combinations for which $i, j \geq 0$, we require that
\be
15k \geq -i_0 \mbox{ and } 7k \leq j_0. 
\ee
This implies that 
\be
2c/15 \leq  k \leq c/7
\ee
Assuming that $c=7q+r$, where $0\leq r\leq 6$, we find that 
\be
(14q+2r)/15 \leq k \leq q+r/7.
\ee
An integral solution of $k=q$ is possible if $(14q+2r)/15 \leq q$ i.e. $q\geq 2r$ or $q\geq 12$. 
Thus we obtain $i=i_0+15q=-2c+15q$ and $ j=j_0-7q=c-7q=r$. We obtain $2c = 14q+2r \geq 14\cdot 12 =168$. So for all
odd lengths $n\geq 27+168=195$, there exists a counterexample to the $\lu$-$\lc$ conjecture. 
\end{proof}

\section{LU-LC Equivalence of Cluster States} \label{sec:cluster}
Stabilizer states can be derived from graphs in two important ways. In one method we associate the edges to qubits. Such
stabilizer states will be studied in the next section. In this section we study stabilizer states wherein the qubits are identified
with the vertices of a graph. Such states are also termed graph states. A special class of these
graph states are the cluster states. We turn our attention to cluster states and consider their equivalence classes. Our interest in 
cluster states stems from the fact that the cluster states form a universal resource for measurement based quantum computation.

\begin{lemma}
Let $\Gamma$ be an $m\times n $ rectangular grid, where $m,n \geq 5$. Then the local unitary and local Clifford
equivalence classes of the graph state $\ket{G}$ are the same.
\end{lemma}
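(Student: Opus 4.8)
The plan is to verify the Minimal Support Condition, Lemma~\ref{lm:msc}, for the grid graph state $\ket{G}$. Write $S(G)$ for the stabilizer of $\ket{G}$; it is generated by the canonical generators $K_v=X_v\prod_{w\in N(v)}Z_w$, $v\in V(\Gamma)$, where $N(v)$ is the neighborhood of $v$ in $\Gamma$. Every element of $S(G)$ is, up to sign, $K_A=\prod_{v\in A}K_v$ for a unique $A\subseteq V(\Gamma)$, and a direct computation gives $\supp(K_A)=A\cup\{w\notin A:\ |N(w)\cap A|\text{ is odd}\}$, with the Pauli at $v\in A$ equal to $X$ or $Y$ according to the parity of $|N(v)\cap A|$, and the Pauli at each $w$ outside $A$ in the support equal to $Z$. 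Since the $m\times n$ grid with $m,n\ge 5$ is connected and has more than two vertices, $\ket{G}$ contains no Bell-pair factor, so by the final clause of Lemma~\ref{lm:msc} it suffices to prove $M(G)=S(G)$, where $M(G)$ is the group generated by the minimal support elements of $S(G)$.

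As $M(G)\subseteq S(G)$ always, and the $K_v$ generate $S(G)$, it is enough to show that every $K_v$ is itself a minimal support element. Here $\supp(K_v)=\{v\}\cup N(v)$ is a ``star'' centered at $v$. Suppose, for contradiction, that some $K_A$ satisfies $\emptyset\neq\supp(K_A)\subsetneq\supp(K_v)$. Then $A\subseteq\{v\}\cup N(v)$ and $A\notin\{\emptyset,\{v\}\}$, so $A$ contains a neighbor $t$ of $v$. Because the grid is triangle-free, $t$ and $v$ have no common neighbor, hence the only neighbor of $t$ lying in $\{v\}\cup N(v)$ is $v$ itself; consequently $t$ contributes a $Z$ to every neighbor of $t$ outside $\supp(K_v)$, and such a $Z$ survives in $\supp(K_A)$ unless some other vertex of $A$ is also adjacent to that neighbor. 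I would then run a short case analysis, on the position of $v$ and of $t$ relative to the sides and corners of the grid, producing in each case a vertex $w\notin\{v\}\cup N(v)$ with $|N(w)\cap A|$ odd: typically the second vertex reached from $v$ through $t$, or, when that vertex is absent because $t$ lies on the boundary, one of the lateral second neighbors of $t$. Such a $w$ lies in $\supp(K_A)\setminus\supp(K_v)$, a contradiction; hence every $K_v$ is minimal, $M(G)=S(G)$, and Lemma~\ref{lm:msc} gives $\lu(G)=\lc(G)$. The hypothesis $m,n\ge 5$ enters precisely here: it guarantees, in every case, that the escaping vertex $w$ exists and is not cancelled.

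I expect the case analysis establishing minimality of each $K_v$ to be the only real work, and its delicate part to be the boundary and corner configurations: there the ``obvious'' escaping vertex two steps from $v$ is missing, and one must instead exhibit an uncancelled $Z$ among the lateral second neighbors of $t$ and check it cannot be annihilated by another vertex of $A$ — which is where the bound $m,n\ge5$ is consumed. One could instead bypass the ``$M(G)=S(G)$'' reduction and argue directly that the minimal elements together place two distinct Paulis from $\{X,Y,Z\}$ on every qubit, but identifying the minimal elements seems unavoidable on either route.
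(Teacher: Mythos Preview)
Your proposal is correct and follows essentially the same route as the paper: show that every canonical generator $K_v$ is a minimal support element by a case analysis on the position of $v$ (interior, boundary, adjacent to boundary), conclude $M(G)=S(G)$, and invoke Lemma~\ref{lm:msc}. Your framing via the explicit formula for $\supp(K_A)$ and the triangle-freeness of the grid is a slightly cleaner way to organize the same case analysis the paper carries out with figures, and you correctly identify that the hypothesis $m,n\ge 5$ is consumed precisely in the boundary/corner cases where the straight-line second neighbor of $t$ is missing.
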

\begin{proof}
The stabilizer of the graph state is  given by
\be
S = \langle K_v \mid v\in V(\Gamma) \rangle \mbox{ where } K_v = X_v \prod_{y\in N(x) } Z_{y},
\ee
and  $N(x)$ denotes the neighbors of $x$.  
If a generator $K_v$ is  not a minimal support element of $S$, then there exists
a linear combination of the generators $\{ K_v\}\cup \{ K_x \mid x\in N(v) \}$ such that  its support is strictly contained in $\supp(K_v)$.
We partition the vertices of the graph into three different sets:
\begin{compactenum}
\item $V_i$:  vertices that are not connected to the boundary at all
\item $V_o$: vertices on the boundary of the grid
\item$V_m$: vertices which are connected to the boundary by at least one edge
\end{compactenum}
Consider a generator $K_v$, where $v\in V_i$.  
\begin{center}
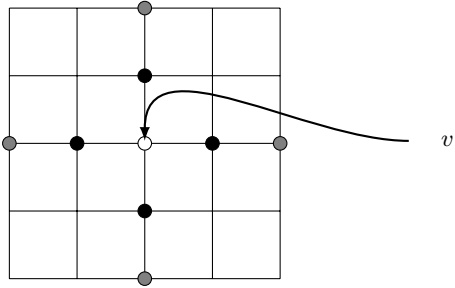
\begin{figure}[h]
\begin{tikzpicture}[scale=.9,every node/.style={minimum size=1cm},on grid]		
    \begin{scope}[
    	yshift=90,every node/.append style={
    	yslant=0.00,xslant=0},yslant=0,xslant=0
    	             ]
   	\draw[step=10mm, black] (0,0) grid (4,4);
     \draw [fill=white](2,2) circle (.1) ;
   	 \draw [fill=black](2,1) circle (.1) ;
        \draw [fill=black](2,3) circle (.1);
        \draw [fill=black](1,2) circle (.1);
        \draw [fill=black](3,2) circle (.1);
 	 \draw [fill=gray](2,0) circle (.1) ;
        \draw [fill=gray](2,4) circle (.1);
        \draw [fill=gray](0,2) circle (.1);
        \draw [fill=gray](4,2) circle (.1);
  \end{scope}
    	
     \draw[-latex,thick](5.9,5.2)node[right]{$v$}
        to[out=180,in=90] (2,5.2);
\end{tikzpicture}
 \caption{$v$ inside the boundary and without neighbor(s) on the boundary}
\end{figure}
\end{center}
Let $N(v)=\{ a, b, c, d \}$, then we can see that every neighbor of $v$ is connected to a vertex which is not 
connected to the remaining three neighbors of $v$. So any linear combination involving $K_v, K_a, K_b, K_c, K_d$
will have a support outside $\supp(K_v)$ unless it is equal to $K_v$.

Next consider a vertex on the boundary of $\Gamma$. Since $m,n \geq 5$, we need to consider three cases shown below 
when $v$ is on the corner
of the grid, when one of $v$'s neighbors is a corner and when no neighbor is  a corner. 
\begin{center}
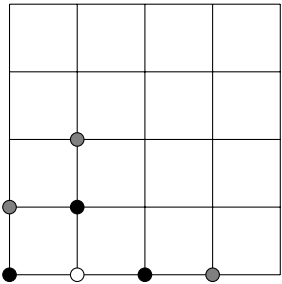
\begin{figure}[h]
\begin{tikzpicture}[scale=.9,every node/.style={minimum size=1cm},on grid]		
    \begin{scope}[
    	xshift=150, yshift=90,every node/.append style={
    	yslant=0.00,xslant=0},yslant=0,xslant=0
    	             ]
    	\draw[step=10mm, black] (0,0) grid (4,4);
     \draw [fill=white](1,0) circle (.1) ;
   	 \draw [fill=black](0,0) circle (.1) ;
        \draw [fill=black](2,0) circle (.1);
     		\draw [fill=black](1,1) circle (.1);
  	    \draw [fill=gray](1,2) circle (.1);
     \draw [fill=gray](3,0) circle (.1);
        \draw [fill=gray](0,1) circle (.1);
  \end{scope}	
  \end{tikzpicture}
  \caption{$v$ on the boundary with  neighbor(s) on a corner}
   \end{figure}
   \begin{figure}[h]
  \begin{tikzpicture}[scale=.9,every node/.style={minimum size=1cm},on grid]	
      \begin{scope}[
    	yshift=90,every node/.append style={
    	yslant=0.00,xslant=0},yslant=0,xslant=0
    	             ]
    	\draw[step=10mm, gray] (0,0) grid (4,4);
     \draw [fill=white](0,0) circle (.1) ;
   	 \draw [fill=black](1,0) circle (.1) ;
        \draw [fill=black](0,1) circle (.1);
  	 \draw [fill=gray](2,0) circle (.1) ;
          \draw [fill=gray](0,2) circle (.1);
  \end{scope}	
    \begin{scope}[
    	xshift=150, yshift=90,every node/.append style={
    	yslant=0.00,xslant=0},yslant=0,xslant=0
    	             ]
    	\draw[step=10mm, black] (0,0) grid (4,4);
     \draw [fill=gray](2,2) circle (.1) ;
   	 \draw [fill=gray](0,0) circle (.1) ;
        \draw [fill=black](1,0) circle (.1);
        \draw [fill=white](2,0) circle (.1);
        \draw [fill=black](3,0) circle (.1);
     \draw [fill=black](2,1) circle (.1);
   	 \draw [fill=gray](4,0) circle (.1) ;
  \end{scope}	
  \end{tikzpicture}
  \caption{$v$ on the boundary and without neighbors on a corner}
\end{figure}
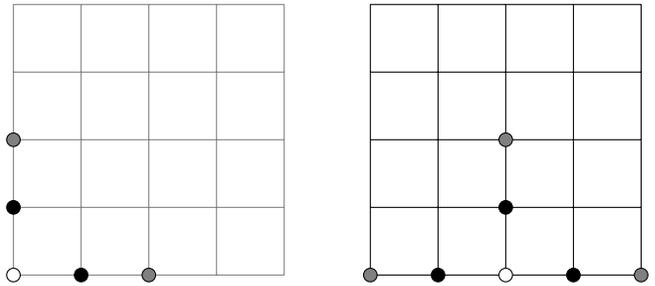
\end{center}

Finally, when $v$ is in $V_m$, then we need to consider the following cases depending on whether $v$ has one or two 
neighbors on the boundary of the grid.
\begin{center}
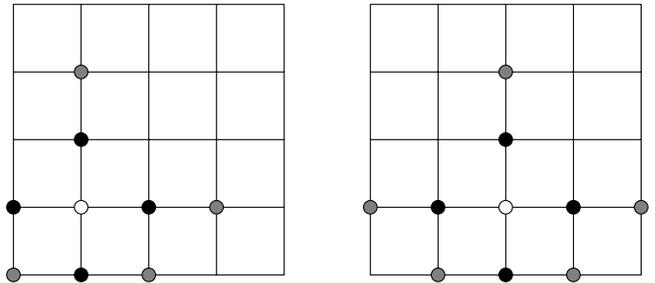
\begin{figure}[h]
\begin{tikzpicture}[scale=.9,every node/.style={minimum size=1cm},on grid]		
    \begin{scope}[
    	yshift=90,every node/.append style={
    	yslant=0.00,xslant=0},yslant=0,xslant=0
    	             ]
    	\draw[step=10mm, black] (0,0) grid (4,4);
     \draw [fill=gray](0,0) circle (.1) ;
     \draw [fill=white](1,1) circle (.1) ;
   	 \draw [fill=black](1,0) circle (.1) ;
        \draw [fill=black](0,1) circle (.1);
  	 \draw [fill=gray](2,0) circle (.1) ;
         \draw [fill=black](2,1) circle (.1);
  	 \draw [fill=gray](3,1) circle (.1) ;
          \draw [fill=black](1,2) circle (.1);
          \draw [fill=gray](1,3) circle (.1);
  \end{scope}
    \begin{scope}[
    	xshift=150, yshift=90,every node/.append style={
    	yslant=0.00,xslant=0},yslant=0,xslant=0
    	             ]
    	\draw[step=10mm, black] (0,0) grid (4,4);
     \draw [fill=gray](2,3) circle (.1) ;
   	 \draw [fill=black](2,2) circle (.1) ;
        \draw [fill=black](2,0) circle (.1);
        \draw [fill=black](1,1) circle (.1);
  	 \draw [fill=black](3,1) circle (.1) ;
        \draw [fill=white](2,1) circle (.1);
     \draw [fill=gray](4,1) circle (.1);
        \draw [fill=gray](0,1) circle (.1);
        \draw [fill=gray](1,0) circle (.1);
        \draw [fill=gray](3,0) circle (.1);
  \end{scope}	
  \end{tikzpicture}
  \caption{$v$ inside the boundary but with at least one neighbor on the boundary}
  \end{figure}
\end{center}
It can be easily verified that in each of the above cases, whenever there is a linear combination of the generators
in the support of $v$, the resulting linear combination always has support outside $\supp(v)$ unless the combination is
trivially equal  to $K_v$. Thus every generator $K_v$ is a minimal element in $S$. Thus $M(\ket{G})= S$ itself and by Lemma~\ref{lm:msc}, we have $\lu(\ket{G}) =\lc(\ket{G})$. 
\end{proof}

Zeng et al., \cite{zeng07} showed that if a graph $\Gamma$ has no 3 or 4 cycles, then $\lu(\ket{G}) = \lc(\ket{G})$. From this
it follows that the graph states associated to honeycomb (i.e., hexagonal) lattices also satisfy the LU-LC conjecture. 
These results for the cluster states and the hexagaonal lattices are somewhat surprising in that these states  are universal resources
for measurement based quantum computation.

\section{LU-LC Equivalence of Surface Code States}\label{sec:surfaceStates}

In this section we show that surface code states cannot be counterexamples to the $\lu$-$\lc$ conjecture. 
These results prepare the way to make the connection to matroids.
We consider surface codes derived from connected graphs that have no loops or coloops. 
Additionally, we require the graphs to have no 2-cycles or 2-cocycles and are thus free from 
Bell pairs. 
The following lemma is known in the context of surface codes, but we record it for our later use.
\begin{lemma}\label{lm:cyclesOnly}
Let $\ket{\psi}$ be a surface code state associated to a graph $\Gamma$  without loops or coloops. 
Let $g$ be an element in $C(S(\psi))$, the centralizer of  $S({\psi})$ such that $g$ consists of only $Z$  (or  $X$) operators alone. Then $\supp(g)$ is an union of cycles   (or cocycles) in $\Gamma$.
In particular, the face (vertex) operators of $\Gamma$ generate only union of cycles (cocycles) of $\Gamma$.
\end{lemma}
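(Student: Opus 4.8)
The plan is to recognize this as essentially the statement that the $Z$-type part of the stabilizer (the CSS "$Z$-code") of a surface code is exactly the cycle space of the graph $\Gamma$, and dually the $X$-type part is the cocycle space. So I would work with the binary-code picture: identify $Z$-only Pauli operators on the $n=|E(\Gamma)|$ qubits with vectors in $\F_2^{E(\Gamma)}$, and show that such a vector $z$ commutes with every stabilizer generator iff its support is a disjoint union of cycles.

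First I would set up the commutation condition. An element $g$ consisting of only $Z$ operators, with support $T \subseteq E(\Gamma)$, automatically commutes with every face operator $B_f$ (both are $Z$-type) and with every other $Z$-type element of $S(\psi)$. So $g \in C(S(\psi))$ iff $g$ commutes with every site operator $A_v = \prod_{e \in \delta(v)} X_e$. Under the symplectic inner product, this says $|T \cap \delta(v)|$ is even for every vertex $v \in V(\Gamma)$ — i.e., $T$ meets the edge-set at each vertex in an even number of edges. That is precisely the defining condition for $T$ to be an element of the cycle space (the even subgraphs), which by the standard theorem in graph theory is exactly the set of edge-disjoint unions of (simple) cycles. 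This handles the $Z$ case; the $X$ case follows by the duality set up in Section~\ref{sec:bg}: an $X$-only element of $C(S(\psi))$ must commute with every $B_f$, which by Lemma~\ref{lm:graphOps}-style duality ($A_v$ in $\Gamma$ corresponds to a face-boundary cycle in $\Gamma^\ast$, and $B_f$ corresponds to a vertex star in $\Gamma^\ast$) translates into the same even-incidence condition in $\Gamma^\ast$, so $\supp(g)$ is a union of cycles of $\Gamma^\ast$, i.e., a union of cocycles of $\Gamma$.

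For the last sentence — that face operators generate only unions of cycles, and vertex operators only unions of cocycles — I would simply observe that each $B_f = \prod_{e \in \partial(f)} Z_e$ is $Z$-type and its support $\partial(f)$ is an elementary cycle (boundary of a face), hence trivially a union of cycles; any product of face operators is still $Z$-type and still lies in $C(S(\psi))$, so by the first part its support is a union of cycles. Dually for $\prod A_v$: each $A_v$ is $X$-type with support the elementary cocycle $\delta(v)$, and products remain $X$-type elements of the centralizer, so their supports are unions of cocycles.

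The main obstacle — really the only non-bookkeeping point — is invoking cleanly the graph-theoretic fact that an edge subset with even degree at every vertex decomposes as an edge-disjoint union of cycles; this is where the hypothesis of "no loops" is used (a loop would be a degenerate $1$-cycle and one wants cycles in the genuine sense), and I would cite it as standard rather than reprove it. A secondary subtlety is making the $X$-case duality argument precise: one must be careful that "cocycle of $\Gamma$" is defined (as in the background section) exactly as "cycle of $\Gamma^\ast$", and that the hypothesis "no coloops" on $\Gamma$ is what guarantees $\Gamma^\ast$ has no loops, so the same cycle-decomposition theorem applies in $\Gamma^\ast$. Everything else is routine translation between the Pauli/symplectic language and the binary cycle-space language.
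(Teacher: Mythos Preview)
Your proposal is correct and follows essentially the same approach as the paper: both arguments reduce membership of a $Z$-only operator in $C(S(\psi))$ to the condition that $\supp(g)$ meets $\delta(v)$ in an even number of edges for every vertex $v$ (via commutation with the site operators $A_v$), and then identify this with the cycle space of $\Gamma$; the $X$-case is handled by passing to $\Gamma^\ast$. The paper phrases the argument contrapositively (if $\supp(g)$ is not a union of cycles then some $A_v$ fails to commute), whereas you phrase it directly, but the content is the same; your discussion of where the no-loops/no-coloops hypotheses enter is in fact more explicit than the paper's.
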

\begin{proof}
Assume that  $g$ is a $Z$-only operator in $C(S(\psi))$ such that its support is not a cycle in $\Gamma$. Since $g$ consists of only $Z$ operators it must be generated by the face operators of $\Gamma$ and the encoded $Z$ operators of the surface  code associated to $\Gamma$. If the support of $g$
is not an union of cycles then there exists a vertex $v$ such that an odd number  of the edges of $\supp(g)$  are incident on $v$. Then the vertex operator $A_v$ (consisting of $X$ operators alone) does not commute with $g$ as it overlaps with $g$ over odd number of qubits. Therefore $g$ cannot be an element of 
$C(S(\psi))$, as all elements in $C(S(\psi))$ commute with vertex operators.  By a similar argument but considering the 
 dual graph we can also show that all the $X$-only operators in $C(S(\psi))$ must be cycles in $\Gamma^\ast$.
\end{proof}
Requiring that $\Gamma$ has no coloops is equivalent to stating that it has no cut set of size one. 
Lemma~\ref{lm:cyclesOnly} can be extended to include graphs with loops and coloops, however this is sufficient for us. 
In the following we often denote the stabilizer of a quantum code $Q$ by $S(Q)$.
Given a subset $\omega \subseteq \{1,2,\ldots,n\}$,  and an abelian subgroup $S \leq C(S(Q))$ we denote by $A_\omega^S =|\{ g\in S: \supp(g)=\omega\}|$.

\begin{lemma}\label{lm:minimalType}
Suppose  $Q$ is a surface code associated to a graph $\Gamma$ without cycles or cocycles of length $\leq 2$. Let $S$ be an abelian subgroup of $C(S(Q))$ that contains $S(Q)$. Then for any  $A_v\in S$ and $B_f\in S$ we there exist minimal elements
$\{g_1^v,\ldots, g_{k_v}^v \} \subset S $ and $\{h_1^f,\ldots, h_{k_f}^f \}\subset S$ such that 
\begin{compactenum}[i)]
\item Each  $g_i^v$ is a $X$-only operator and $\supp(A_v) =\cup_{i=1}^{k_v} \supp(g_i^v)$.
\item $A_{\supp(g_i^v)}^S=1$.
\item  Each $h_i^f$ is a $Z$-only operator and $\supp(B_f) = \cup_{i=1}^{k_f} \supp(h_i^f)$.
\item $A_{\supp(h_i^f)}^S=1$.
\end{compactenum}
Further, for any $j\in \{1,2,\ldots, n \}$, there exists some $A_v$ and $B_f$ such that $j\in \supp(A_v)\cap \supp(B_f)$, 
where $n=|E(\Gamma)|$, the total number of qubits.
\end{lemma}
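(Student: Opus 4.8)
The plan is to deduce all four decomposition claims from a single local observation about elements of $S$ confined to an elementary cocycle or cycle, and then to read the covering statement straight off the definitions of $A_v$ and $B_f$.

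First I would establish the observation: \emph{if $h\in S$ has $\emptyset\neq\supp(h)\subseteq\supp(A_v)=\delta(v)$, then $h$ is $X$-only}; dually, if $\emptyset\neq\supp(h)\subseteq\supp(B_f)=\partial(f)$, then $h$ is $Z$-only. Writing $h=\bigotimes_e h_e$, let $h^Z$ be the $Z$-only Pauli with $h^Z_e=Z$ when $h_e\in\{Y,Z\}$ and $h^Z_e=I$ otherwise. Since $h$ and $h^Z$ anticommute with the $X$-only site operator $A_{v'}$ on exactly the same edges (those in $\delta(v')$ on which $h$ carries a $Y$ or $Z$), $h^Z$ commutes with every site operator, so $h^Z\in C(S(Q))$, and by Lemma~\ref{lm:cyclesOnly} $\supp(h^Z)$ is a union of cycles of $\Gamma$. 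But $\supp(h^Z)\subseteq\supp(h)\subseteq\delta(v)$ consists of edges incident to $v$; if it were nonempty, an edge $vu$ in it would force $u$ to have positive even degree within $\supp(h^Z)$, hence two parallel $v$--$u$ edges and thus a $2$-cycle, contradicting that $\Gamma$ has no cycle of length $\leq 2$. So $h^Z$ is trivial and $h$ is $X$-only; the dual statement follows by running the same argument in $\Gamma^\ast$, using $\partial_\Gamma(f)=\delta_{\Gamma^\ast}(f^\ast)$ and that $\Gamma$ has no cocycle of length $\leq 2$.

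Next I would extract the minimal elements by peeling. Starting from $A_v\in S$ ($X$-only, support $\delta(v)$): if $A_v$ is already minimal in $S$, set $k_v=1$; otherwise there is $h\in S$ with $\emptyset\neq\supp(h)\subsetneq\delta(v)$, which by the observation is $X$-only, so $A_vh\in S$ is $X$-only with $\supp(A_vh)=\delta(v)\setminus\supp(h)$, giving a splitting $\delta(v)=\supp(h)\sqcup\supp(A_vh)$ into supports of two $X$-only elements of $S$, each a proper nonempty subset of $\delta(v)$. Recursing on each factor (support sizes strictly decrease) terminates with $\delta(v)=\bigsqcup_{i=1}^{k_v}\supp(g_i^v)$, each $g_i^v\in S$ a minimal $X$-only element; this is (i). For (ii), any $g'\in S$ with $\supp(g')=\supp(g_i^v)\subseteq\delta(v)$ is $X$-only by the observation, and an $X$-only Pauli is fixed by its support up to sign, so (since $S$ contains no $-I$) $g'=g_i^v$, i.e. $A_{\supp(g_i^v)}^S=1$. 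Parts (iii) and (iv) are the $Z$/$\partial(f)$/$\Gamma^\ast$ analogues, word for word. Finally, the covering statement is immediate: a qubit is an edge $e=uw$, so $e\in\delta(u)=\supp(A_u)$, and $e$ lies on the boundary of some face $f$, so $e\in\partial(f)=\supp(B_f)$.

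The hard part will be the observation in the second paragraph --- that confinement to an elementary cocycle (resp.\ cycle) kills the $Z$- (resp.\ $X$-) part --- since this is exactly where the hypothesis ``no cycles or cocycles of length $\leq 2$'' is spent; it is what forbids ``mixed'' minimal elements inside $\supp(A_v)$ or $\supp(B_f)$ and makes the peeling recursion go through. Everything after it is routine bookkeeping.
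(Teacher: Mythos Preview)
Your proof is correct and follows essentially the same approach as the paper's. The paper also uses Lemma~\ref{lm:cyclesOnly} to argue that the $Z$-part of any element confined to $\supp(A_v)$ must be trivial (since a union of cycles supported on edges all incident to $v$ would force a $2$-cycle), then peels off $X$-only minimal pieces recursively and uses uniqueness of $X$-only operators on a given support for (ii); the only cosmetic difference is that you isolate the ``confinement forces $X$-only'' observation as a standalone claim up front, whereas the paper proves it inline while choosing the first minimal element.
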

\begin{proof}
If $A_v$ is minimal then i) holds trivially for $A_v$ by letting $g_1^v=A_v$. 
Suppose that $A_v$ is a non-minimal element in $S$. There exists a minimal element $g_1^v\neq I$ in $S$ such that 
$\supp(g_1^v) \subsetneq \supp(A_v)$. Now as $Q$ is a CSS code, $g_1^v$  can be written as 
$g_1^v=g_xg_z$,  for some $g_x,g_z\in C(S(Q))$ where both $g_x$ and $g_z$ are not simultaneously trivial. 
It follows that the supports of both $g_x$ and $g_z $ must be strictly contained in the support of $A_v$. In particular, 
 $\supp(g_z) \subsetneq \supp(A_v)$. By Lemma~\ref{lm:cyclesOnly}, $g_z$ must be a cycle of $\Gamma$. However, all the edges of $g_z$ being a subset of $\supp(A_v)$ must be incident on the same vertex. Then $\supp(g_z)$ can be
cycle  only if  it contains 2-cycles. But this contradicts that $\Gamma$ does not have 2-cycles. Therefore, $g_z=I$ and $g_x= g_1^v\neq I$. Now consider the element $g'=A_vg_1^v$, it is an $X$-only operator whose support is given by $\supp(A_v)\setminus \supp(g_1^v)$. Now $g'$ is either minimal or not. If it is minimal then by relabeling $g'=g_2^v$ we can write $\supp(A_v)= \supp(g_1^v)\cup \supp(g_2^v )$ and we are done. On the other hand if $g'$ is not minimal, we can repeat the same process as with $A_v$ and we will eventually
end up with a set of elements $\{ g_1^v,\ldots, g_{k_v}^v \}$ such that 
$\supp(A_v)= \cup_{i=1}^{k_v} \supp(g_i^v)$. This shows that i) holds.

Let $\omega=\supp(g_i^v)$. Suppose that $A_{\omega}^S\neq 1$. Then there exists a $g_i'\in S$ such that $g_i'\neq g_i$ and
$\supp(g_i')=\omega$. Since $\omega \subsetneq \supp(A_v)$, it follows that $g_i'$ is also a minimal element within the support
of $A_v$. But we have already seen that every minimal element in the support of $A_v$ must consist of $X$-only operators.
Therefore $g_i'=g_i^v$ contradicting that $g_i'\neq g_i^v$. Thus $A_{\supp(g_i^v)}^S=1$,  proving ii).

We claim that $j\in \{1,2,\ldots,n \}$ occurs in the support of $A_v$ for some $v\in V(\Gamma)$.
Since an edge is incident on at most two vertices, if $j\not\in\supp(A_v)$ for any $v\in V(\Gamma)$, 
this means that both ends of the edge associated to $j$ must be incident on the same vertex. This implies that $\Gamma$ has loops
contrary to our assumptions. So every $j$ occurs in the support of some site operator $A_v$.

By a similar argument but working in the dual graph $\Gamma^\ast$, we can show that $\supp(B_f) =\cup_{i=1}^{k_f}  \supp(h_i)$
for some $Z$-only minimal elements in $S$ and that  $j$ occurs in the support of some face operator $B_f$ as long as there are no coloops. 
\end{proof}

\begin{theorem}\label{th:surfaceCode}
Let $Q$ be the surface code associated to a graph $\Gamma$ cycles or cocycles of length $\leq 2$.
Let $S(Q)$ be the stabilizer of $Q$
and $C(S(Q))$, the centralizer of $S$. Then for any surface code (stabilizer) state $\ket{\psi}$ we have $\lu(\psi)=\lc(\psi)$. 
\end{theorem}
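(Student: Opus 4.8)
The plan is to apply the Minimal Support Condition of Lemma~\ref{lm:msc} to the surface code state $\ket{\psi}$. Recall that any surface code state has the form \eqref{eq:surfaceState}: it is stabilized by the group $S=\spn{A_v,B_f,\overline{X}_1,\ldots,\overline{X}_l,\overline{Z}_{l+1},\ldots,\overline{Z}_k}$, and this $S$ is an abelian subgroup of $C(S(Q))$ containing $S(Q)$, so Lemma~\ref{lm:minimalType} applies with this choice of $S$. First I would invoke the hypothesis that $\Gamma$ has no loops, coloops, 2-cycles or 2-cocycles to conclude that $\ket{\psi}$ is free from Bell pairs, so that Lemma~\ref{lm:msc} is applicable. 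Then the task reduces to showing that the group $M(\psi)$ generated by the minimal support elements of $S$ contains, on every qubit $j$, all three Pauli matrices $X$, $Y$, $Z$.

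The key steps: by Lemma~\ref{lm:minimalType}, for every qubit $j$ there is a site operator $A_v$ and a face operator $B_f$ with $j\in\supp(A_v)\cap\supp(B_f)$; moreover $\supp(A_v)$ is a union of supports of $X$-only minimal elements $g_i^v$ of $S$ and $\supp(B_f)$ is a union of supports of $Z$-only minimal elements $h_i^f$ of $S$. Since the $\supp(g_i^v)$ cover $\supp(A_v)$, the qubit $j$ lies in $\supp(g_i^v)$ for some $i$, and there $g_i^v$ acts as $X$; similarly $j\in\supp(h_{i'}^f)$ for some $i'$, and there $h_{i'}^f$ acts as $Z$. Both $g_i^v$ and $h_{i'}^f$ are minimal support elements of $S$, hence lie in $M(\psi)$. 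Finally, the product $g_i^v\,h_{i'}^f\in M(\psi)$ acts on qubit $j$ as $XZ\propto Y$ (up to a phase, which is irrelevant for the support condition), so all of $X$, $Y$, $Z$ occur on qubit $j$ within $M(\psi)$. Since $j$ was arbitrary, the hypothesis of Lemma~\ref{lm:msc} is met, giving $\lu(\psi)=\lc(\psi)$.

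I expect the main obstacle to be purely bookkeeping: one must be careful that the $g_i^v$ and $h_i^f$ furnished by Lemma~\ref{lm:minimalType} are genuinely minimal elements of the \emph{same} group $S$ that stabilizes the state $\ket{\psi}$ — which is why it is essential that Lemma~\ref{lm:minimalType} was stated for an arbitrary abelian $S$ with $S(Q)\subseteq S\subseteq C(S(Q))$, and that the surface code state's stabilizer \eqref{eq:surfaceState} is exactly such an $S$. A secondary point to check is that $g_i^v$ and $h_{i'}^f$ actually overlap on qubit $j$ with the claimed single-qubit Pauli letters ($X$ and $Z$ respectively), which is immediate from their being $X$-only and $Z$-only operators whose supports contain $j$. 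Everything else is a direct citation of Lemma~\ref{lm:min2min} (implicitly, via Lemma~\ref{lm:msc}) and Lemmas~\ref{lm:cyclesOnly} and~\ref{lm:minimalType}.
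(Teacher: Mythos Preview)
Your proposal is correct and follows essentially the same route as the paper's proof: both invoke Lemma~\ref{lm:minimalType} (applied to the abelian group $S$ with $S(Q)\subseteq S\subseteq C(S(Q))$) to produce, on each qubit $j$, an $X$-only minimal element and a $Z$-only minimal element of $S(\psi)$, then check the Bell-pair-free condition from the absence of short cycles/cocycles, and conclude via Lemma~\ref{lm:msc}. The only cosmetic difference is that the paper writes the stabilizer with general encoded operators $L_i\in C(S(Q))$ rather than citing the CSS form~\eqref{eq:surfaceState}, but since your argument (like the paper's) only uses that $S$ sits between $S(Q)$ and $C(S(Q))$, this makes no difference.
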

\begin{proof} The stabilizer of a general surface code state  is a subgroup of $C(S(Q))$  and contains $S(Q)$. 
\be
S(\psi) =\spn{S, L_1, L_2, \ldots, L_k \bigg|\ba{l} 
L_i \in C(S(Q))\setminus SZ(S(Q)) \\ \mbox { and } L_iL_j =L_j L_i\ea},
\ee
where $k$ is such that $S(\psi)$ is the stabilizer of $\ket{\psi}$. Let $n=|E(\Gamma)|$, the total number of qubits. 
From Lemma~\ref{lm:minimalType}, we know that every vertex operator and face operator are such that 
there exist $X$-only  minimal elements $\{ g_1,\ldots, g_{k_v} \}\in S(\psi)$ and $Z$-only minimal elements 
$\{ h_1,\ldots, h_{k_f} \} \in S(\psi)$ such that $\cup_{i=1}^{k_v}\supp(g_i) =\supp(A_v)$ and $\supp(B_f)=\cup_{i=1}^{k_f}\supp(h_i)$.
Since for every $j$ we have $j\in \supp(A_v) \cap \supp(B_f)$ for some $v\in V(\Gamma)$ and $f\in F(\Gamma)$,
there exist minimal elements $g,h$ such that $g_j=X$ and $h_j=Z$. This means that on every qubit, all three $X, Y, Z $ 
occur on every qubit  in $M(\psi)$.  Since there are no 2-cycles or 2-cocycles, $S(\psi)$ cannot contain elements of the 
form $X_iX_j$ or $Z_iZ_j$. Since $Q$ is a CSS code  if $X_iZ_j$ or $X_iY_j$ are in $C(S(Q))$, then it follows that $\Gamma$
has loops or coloops, contradicting the assumptions on $\Gamma$. Thus $\ket{\psi}$ is free of Bell pairs. 
By Lemma~\ref{lm:msc},  it follows that $\lu(\psi)= \lc(\psi)$. 
\end{proof}

\subsection{Application: On the Existence of Non-Clifford Transversal Gates}
From the point of fault tolerant quantum computing (encoded) transversal gates are very desirable for quantum codes. Transversal gates
within the Clifford group can be found using the symmetries of the stabilizer. Transversal gates outside the Clifford group 
are known only for a handful of quantum codes, the prominent being the quantum Reed-Muller codes \cite{zeng07}. The primary motivation for non-Clifford transversal gates was to find a universal set of encoded transversal gates. That such a universal set does not exist for stabilizer codes has been shown in \cite{zeng07a} and  a similar result for general quantum codes was proved in 
\cite{eastin09}.

Finding transversal gates outside the Clifford group has turned to be very
difficult. In fact, Rains \cite{rains99} has shown that non-Clifford transversal gates do not exist
for $GF(4)$-linear quantum codes. Surface codes are purely additive quantum codes, and in this
section we show that under some restrictions, 
surface codes preclude the existence of non-Clifford transversal gates.

\begin{lemma}\label{lm:cond1OnU}\cite[Lemma~1]{zeng07a}
Let $Q$ be a stabilizer code. If $U=U_1\otimes \cdots \otimes U_n$ is a logical gate for $Q$,
then $U_{\omega} \rho_{\omega}(Q)  = \rho_{\omega}(Q) U_{\omega}$ for all $\omega \subseteq \{1,2,\ldots, n\}$, where $U_{\omega}  = \otimes_{i\in \omega} U_i$,
\ben
\rho_{\omega} =\frac{1}{B_{\omega}(Q)}\sum_{\stackrel{s\in S(Q)}{\supp(s)\subseteq \omega}} \Tr_{\bar{\omega}}(s); \\
\quad B_\omega(Q) =|\{s\in S(Q) :   \supp(s) \subseteq \omega \}|
\label{eq:cond1OnU}
\een
 for all $\omega$.
More generally, if $Q'$
is another  stabilizer code such that $U$ is a local equivalence from $Q$ to $Q'$, then 
$U_{\omega} \rho_{\omega}(Q)  = \rho_{\omega}(Q') U_{\omega}$,
\end{lemma}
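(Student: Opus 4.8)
The plan is to deduce everything from how the maximally mixed code state transforms under $U$. Write $S(Q)\le\mc{P}_n$ for the stabilizer, of order $2^{n-k}$, so that the code projector is $P_Q=2^{-(n-k)}\sum_{s\in S(Q)}s$ and the normalized maximally mixed state on the code space is $\rho_Q=2^{-n}\sum_{s\in S(Q)}s$. A logical gate $U$ satisfies $UQ=Q$, hence $UP_QU^\dagger=P_Q$ and therefore $U\rho_QU^\dagger=\rho_Q$; more generally, if $U$ is a local equivalence from $Q$ to $Q'$ then $UQ=Q'$, the two code spaces have the same dimension, and $U\rho_QU^\dagger=\rho_{Q'}$. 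So the only fact about logical gates we need to exploit is this single operator identity.

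Next I would compute the reduced operator $\Tr_{\bar\omega}[\rho_Q]$. For a Pauli $s=\bigotimes_i s_i$ one has $\Tr_{\bar\omega}(s)=\big(\prod_{i\in\bar\omega}\Tr(s_i)\big)\bigotimes_{i\in\omega}s_i$, which vanishes unless $s_i=I$ for every $i\in\bar\omega$, i.e.\ unless $\supp(s)\subseteq\omega$. Hence only the subgroup $\{s\in S(Q):\supp(s)\subseteq\omega\}$, of order $B_\omega(Q)$, contributes, and $\Tr_{\bar\omega}[\rho_Q]=2^{-n}\sum_{\supp(s)\subseteq\omega}\Tr_{\bar\omega}(s)=\tfrac{B_\omega(Q)}{2^n}\,\rho_\omega(Q)$, with $\rho_\omega(Q)$ exactly as in \eqref{eq:cond1OnU}; equivalently, $\Tr_{\bar\omega}[\rho_Q]$ is a positive scalar times the projector onto the common $+1$-eigenspace of those stabilizers, regarded as operators on the qubits of $\omega$.

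Now I would push the partial trace through the tensor factorization $U=U_\omega\otimes U_{\bar\omega}$ (reordering qubits so that those in $\omega$ come first). From $\Tr_{\bar\omega}[(A\otimes I)\,N\,(C\otimes I)]=A\,\Tr_{\bar\omega}[N]\,C$ one gets $\Tr_{\bar\omega}[U\rho_QU^\dagger]=U_\omega\,\Tr_{\bar\omega}\!\big[(I\otimes U_{\bar\omega})\rho_Q(I\otimes U_{\bar\omega}^\dagger)\big]\,U_\omega^\dagger$. Expanding $\rho_Q$ over $S(Q)$, conjugation by $U_{\bar\omega}$ changes only the $\bar\omega$-factor $s|_{\bar\omega}$ of each Pauli $s$, and $\Tr\!\big(U_{\bar\omega}\,s|_{\bar\omega}\,U_{\bar\omega}^\dagger\big)=\Tr(s|_{\bar\omega})$ by cyclicity, so term by term this conjugation is invisible to $\Tr_{\bar\omega}$ and $\Tr_{\bar\omega}\!\big[(I\otimes U_{\bar\omega})\rho_Q(I\otimes U_{\bar\omega}^\dagger)\big]=\Tr_{\bar\omega}[\rho_Q]$. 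Combining with $U\rho_QU^\dagger=\rho_Q$ yields $U_\omega\,\Tr_{\bar\omega}[\rho_Q]\,U_\omega^\dagger=\Tr_{\bar\omega}[\rho_Q]$, i.e.\ $U_\omega\rho_\omega(Q)=\rho_\omega(Q)\,U_\omega$. For the two-code statement the same computation gives $U_\omega\,\Tr_{\bar\omega}[\rho_Q]\,U_\omega^\dagger=\Tr_{\bar\omega}[\rho_{Q'}]$; both sides are positive scalar multiples of projectors, so comparing the nonzero eigenvalue (equivalently, the operator norm) forces $B_\omega(Q)=B_\omega(Q')$, whence $U_\omega\rho_\omega(Q)=\rho_\omega(Q')\,U_\omega$.

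The argument is largely mechanical; the one conceptual point, and the step I would state carefully, is the observation that conjugating the traced-out factor by $U_{\bar\omega}$ leaves $\Tr_{\bar\omega}$ unchanged --- so a local unitary logical gate can neither create nor destroy stabilizer ``weight'' supported inside $\omega$. The remainder is bookkeeping: fixing the normalization relating $\rho_\omega$ to $\Tr_{\bar\omega}[\rho_Q]$, and, in the two-code case, the short eigenvalue-matching step giving $B_\omega(Q)=B_\omega(Q')$.
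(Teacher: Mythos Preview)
Your proof is correct. The paper does not actually prove this lemma; it is quoted from \cite[Lemma~1]{zeng07a} and used as a black box, so there is no in-paper argument to compare against. Your approach---tracing the identity $U\rho_QU^\dagger=\rho_{Q'}$ over $\bar\omega$ and using that conjugation by $I\otimes U_{\bar\omega}$ is invisible to $\Tr_{\bar\omega}$---is the standard one and is essentially how the original reference proceeds. One minor remark: your term-by-term cyclicity argument for $\Tr_{\bar\omega}[(I\otimes U_{\bar\omega})\rho_Q(I\otimes U_{\bar\omega}^\dagger)]=\Tr_{\bar\omega}[\rho_Q]$ is fine but slightly more than needed, since this identity holds for \emph{any} operator in place of $\rho_Q$ and any unitary $U_{\bar\omega}$, directly from the basis-independence of the partial trace; you need not invoke the Pauli expansion at that step. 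The eigenvalue-matching step to extract $B_\omega(Q)=B_\omega(Q')$ in the two-code case is a nice touch that makes the normalization explicit.
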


\begin{theorem}
Let $Q$ be a  surface code  such that the associated graph does not have 
any cycles or cocycles of length $\leq 2$. 
Then $Q$ does not have any transversal encoded non-Clifford gate, $U=U_1\otimes U_2\otimes \cdots \otimes U_n$.
\end{theorem}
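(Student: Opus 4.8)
The plan is to run the minimal‑support argument of Theorem~\ref{th:surfaceCode}, but applied to a transversal \emph{logical} gate through Lemma~\ref{lm:cond1OnU}, instead of to a single stabilizer state. Let $U=U_1\otimes\cdots\otimes U_n$ be a transversal encoded gate, so that $UQ=Q$. By Lemma~\ref{lm:cond1OnU} (with $Q'=Q$), the operator $U_\omega=\bigotimes_{i\in\omega}U_i$ commutes with $\rho_\omega(Q)$ for every $\omega\subseteq\{1,\ldots,n\}$. The goal is to show that every $U_i$ normalizes $\mc{P}_1$, so that $U_i\in\mc{K}_1$ and hence $U\in\mc{K}_n^l\subseteq\mc{K}_n$ is Clifford; it then cannot be a non-Clifford gate.

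First I would fix a qubit $j$ and apply Lemma~\ref{lm:minimalType} with $S=S(Q)$ (which is abelian and trivially contains $S(Q)$). The lemma furnishes a vertex $v$ with $j\in\supp(A_v)$, a decomposition $\supp(A_v)=\bigcup_i\supp(g_i^v)$ into $X$-only minimal elements $g_i^v\in S(Q)$ with $A_{\supp(g_i^v)}^{S(Q)}=1$, and an index $i$ with $j\in\omega:=\supp(g_i^v)$; write $g=g_i^v$. Since $g$ is a minimal element, no nonidentity element of $S(Q)$ is supported strictly inside $\omega$, and $A^{S(Q)}_\omega=1$ says $g$ is the only element supported on exactly $\omega$; hence the only elements of $S(Q)$ supported in $\omega$ are $I$ and $g$, so $B_\omega(Q)=2$ and $\rho_\omega(Q)=2^{\,n-|\omega|-1}(I_\omega+g_\omega)$, where $g_\omega$ — the restriction of $g$ to $\omega$ — is a $\pm 1$ multiple of $\bigotimes_{i\in\omega}X_i$. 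Commuting with $\rho_\omega(Q)$ is therefore equivalent to $U_\omega\bigl(\bigotimes_{i\in\omega}X_i\bigr)U_\omega^\dagger=\bigotimes_{i\in\omega}X_i$. Moreover $\omega$ is a minimal support of $S(Q)$, hence a cocycle of $\Gamma$ of length $\geq 3$ by Lemma~\ref{lm:cyclesOnly} and the hypothesis on $\Gamma$, so $\omega\setminus\{j\}\neq\emptyset$.

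I would then isolate the tensor factor at $j$. Write $U_jX_jU_j^\dagger=aX+bY+cZ$ with $a,b,c$ real (it is a traceless Hermitian unitary, so $a^2+b^2+c^2=1$), and expand $W:=\bigotimes_{i\in\omega\setminus\{j\}}U_iX_iU_i^\dagger$ in the Pauli basis over $\omega\setminus\{j\}$. Matching coefficients in $(U_jX_jU_j^\dagger)\otimes W=X_j\otimes\bigl(\bigotimes_{i\in\omega\setminus\{j\}}X_i\bigr)$, using linear independence of the Pauli strings and $W\neq 0$, forces $b=c=0$ and $a=\pm 1$; that is, $U_jX_jU_j^\dagger=\pm X_j$. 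The identical reasoning in the dual graph — Lemma~\ref{lm:minimalType} also produces a face $f$ with $j\in\supp(B_f)$ and a $Z$-only minimal element $h\in S(Q)$ through $j$ with $A^{S(Q)}_{\supp(h)}=1$ — gives $U_jZ_jU_j^\dagger=\pm Z_j$, and hence $U_jY_jU_j^\dagger=\pm Y_j$. So $U_j\in\mc{K}_1$; since $j$ was arbitrary, $U$ is a local Clifford, which is the claim.

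The point that needs the most care, I expect, is the evaluation of $\rho_\omega(Q)$: one must check that a minimal support $\omega$ with $A^{S(Q)}_\omega=1$ carries exactly two stabilizer elements, so that $\rho_\omega(Q)$ is genuinely proportional to $I_\omega+g_\omega$ and the commutation hypothesis collapses to $[U_\omega,\bigotimes_{i\in\omega}X_i]=0$. Everything else — the Pauli‑expansion bookkeeping of the previous paragraph, and the fact that forbidding cycles and cocycles of length $\leq 2$ is exactly what makes the minimal elements through each qubit purely $X$-type or purely $Z$-type (via Lemma~\ref{lm:minimalType}) of weight $\geq 3$ — is routine.
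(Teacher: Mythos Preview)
Your proof is correct and follows essentially the same approach as the paper: use Lemma~\ref{lm:minimalType} with $S=S(Q)$ to find, through every qubit $j$, an $X$-only minimal element $g$ and a $Z$-only minimal element $h$ with $A^{S(Q)}_{\supp}=1$; evaluate $\rho_\omega(Q)$ on those supports to see it is proportional to $I_\omega+g_\omega$ (resp.\ $I_\omega+h_\omega$); apply Lemma~\ref{lm:cond1OnU} to force $U_\omega g_\omega U_\omega^\dagger=g_\omega$ and $U_\omega h_\omega U_\omega^\dagger=h_\omega$; and then read off $U_jXU_j^\dagger\propto X$, $U_jZU_j^\dagger\propto Z$, so $U_j\in\mc{K}_1$. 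The only difference is cosmetic: where the paper passes directly from $UgU^\dagger=g$ to $U_jg_jU_j^\dagger\propto g_j$, you spell out the Pauli-basis bookkeeping that justifies this step.
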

\begin{proof}
We assume that $Q$ is derived from a graph $\Gamma$. 
Let $g$ be a site operator or face operator.
By Lemma~\ref{lm:minimalType}, we know that $\supp(g)=\cup_{i\in k_g} \supp(m_i^g)$, for some minimal elements 
$\{m_1^g,\ldots, m_{k_g}^g \} \subset S(Q)$. 
Let $\omega= \supp(m_i^g)$, then we have that $A_{\omega}^S(Q)=1$. 
Now computing $\rho_{\omega}$ as given by \eqref{eq:cond1OnU} we have
\be
\rho_{\omega} =\frac{1}{2}\sum_{\stackrel{s\in S(Q)}{\supp(s)\subseteq \omega}} s= \frac{1}{2} \left(\Tr_{\bar{\omega}}(I)+\Tr_{\bar{\omega}}(m_i^g)\right). 
\ee
By Lemma~\ref{lm:cond1OnU},  $ \rho_{\omega}=U_{\omega}\rho_{\omega} U_{\omega^\dagger} $ i.e.,
\be
\frac{1}{2} \left(\Tr_{\bar{\omega}}(I)+\Tr_{\bar{\omega}}(m_i^g)\right) 
 = \frac{1}{2} \left(\Tr_{\bar{\omega}}(I)+U_{\omega}\Tr_{\bar{\omega}}(m_i^g)U_{\omega}^\dagger\right)
\ee
and we obtain 
$\Tr_{\bar{\omega}}(m_i^g) = U_{\omega}\Tr_{\bar{\omega}}(m_i^g)U_{\omega}^\dagger$. Since $m_i^g$ has no support outside 
$\omega$ we can conclude that $m_i^g= Um_i^gU^\dagger$. 

We have shown so far that any transversal encoded gate of $Q$ maps every minimal operator whose support
lies in the support of a site or face operator to itself under
conjugation. We now show that this restriction implies that  $U$
must be such that every $U_i$ is a Clifford unitary. 
If $U$ is a non-Clifford encoded gate, then there exists a $j\in\{1,2,\ldots, n \}$ such that $U_j \not\in \mc{K}_1$.

By Lemma~\ref{lm:minimalType}, there exist a site operator $A_v$ and a face operator $B_f$ such that $j\in \supp(A_v)\cap \supp(B_f)$. This implies the existence of an $X$-only minimal operator $g$ and a $Z$-only minimal operator $h$ such that 
$UgU^\dagger = g$ and $UhU^\dagger= h$. Hence, 
\be
U_jg_jU_j^\dagger \propto g_j \mbox{ and } U_j h_jU_j^\dagger \propto  h_j, \mbox{  up to a scalar.}
\ee 
Since $g$ is an $X$-only  operator $g_j=X$ and $h$ a $Z$-only  operator $h_j=Z$. This implies that 
\be
U_jX U_j^\dagger \propto X \mbox{ and } U_j Z U_j^\dagger \propto  Z, \mbox{  up to a scalar.}
\ee 
Thus $U_j \in \mc{K}_1$ contrary to the assumption that it is not in $\mc{K}_1$. Thus there exists no transversal encoded
non-Clifford gate for the surface codes under the assumptions stated.
\end{proof}

\section{Matroids and Surface Code States}\label{sec:matroids}

Matroids are useful mathematical structures that find applications in many areas such as graph theory, optimization, error-correcting codes, cryptography. However, matroids are yet to find comparable applications in quantum information theory.
One of our goals is to characterize surface code states in terms on matroids. 
With the results of Section~\ref{sec:surfaceStates}
in hand we make the connection to matroids by showing how to  associate a matroid to every 
CSS surface code state.  We call 
matroids  arising from CSS surface code states ``surface code matroids". Two key operations on matroids are deletion and 
contraction. We show that the matroids that are obtained by these minor operations are also surface code matroids in that they can be associated to surface code states. The surface code matroids are in this sense minor closed and can be characterized in terms
of a list of excluded minors.  We refer the reader to \cite{oxley04} for an introduction to matroids. 

\subsection{Surface  Code Matroids} Having defined surface code states, we now associate a matroid to a surface code state 
$\ket{\psi_\Gamma}$ in a canonical fashion. Recall that a CSS surface code state is stabilized by 
\be
S(\psi_{\Gamma}) = \spn{A_v, B_f, \overline{X}_1,\ldots, \overline{X}_l,\overline{Z}_{l+1},\ldots,\overline{Z}_k\bigg|
\ba{l} v\in V(\Gamma)\\f\in F(\Gamma)\ea},
\ee
where we assume that the encoded $X$ operators $\overline{X}_i$ are $X$-only operators and the encoded $Z$ operators
$\overline{Z}_i$ are $Z$-only operators.
In matrix form $S=\left[ \ba{c|c} S_X& 0 \\ 0 & S_Z \ea\right]$.
Since a CSS stabilizer state is completely determined by  $S_X$ or $S_Z$, we can define the CSS
surface code states in terms of $\{ A_v \mid v\in V(\Gamma) \}\cup\{\overline{X}_1,\ldots, \overline{X}_l\}$ or $\{B_f\mid f\in F(\Gamma)\}\cup\{ \overline{Z}_{l+1},\ldots, \overline{Z}_k\}$. 
 Let the vertex-edge incidence matrix of $\Gamma$ be $\mathbb{I}_{V(\Gamma)}$. 
Denote by $\mathfrak{C}(\Gamma) $ the cycles of $\Gamma$. If $B\subseteq \mathfrak{C}(\Gamma)$,
the we denote its  edge incidence matrix by $\mathbb{I}_{B}$.
The supports of the encoded $X$ and encoded 
$Z$ operators are cycles in $\Gamma^\ast$ and $\Gamma$ respectively. 
Denote  by $C(\Gamma^\ast)$, the cycles in $\Gamma^\ast$ that correspond to $\{\overline{X}_1,\ldots, \overline{X}_l\}$.
We can write $S_X$ in terms of these incidence matrices as 
\ben
S_X=\left[\ba{c}\mathbb{I}_{C(\Gamma^\ast) }  \\ \mathbb{I}_{V(\Gamma)}  \ea \right],
\een
where $C(\Gamma^\ast) \subseteq \mathfrak{C}    (\Gamma^\ast) $.
The surface code matroid of $\ket{\psi_\Gamma}$ is defined as the vector matroid of $S_X$ and we shall denote it as $\mc{M}(\psi_\Gamma)$.  
In other words,   $\mc{M}(\psi_\Gamma)$ is determined by all the trivial cycles of $\Gamma^\ast$ and a subset of the nontrivial cycles
of $\Gamma^\ast$. Note that $\mathbb{I}_{V(\Gamma)}$ contains some dependent rows.
However, the matroid associated does not change when we add dependent rows to the matrix representing the matroid. 

The vector matroid associated to $S_Z$ is the dual matroid of the vector  matroid of $S_X$. Since $S_Z$ is determined by 
$\{B_f\mid\in F(\Gamma)\}\cup \{ \overline{Z}_{l+1},\ldots, \overline{Z}_k\}$, by duality we see that 
\ben
\mc{M}(\psi_\Gamma)^\ast  &= & \left[ \ba{c}\mathbb{I}_{C'(\Gamma) }  \\ \mathbb{I}_{V(\Gamma^\ast)} \ea \right],
\een
where $C'(\Gamma)$ is a subset of $\mathfrak{C}(\Gamma)$.
 
\begin{lemma}\label{lm:minorSurfMatroid}
Let $\mc{M}(\psi_\Gamma)$ be a surface code matroid with ground set $E(\Gamma)$. Then any minor of $\mc{M}(\psi_\Gamma)$  is also a surface code matroid. Furthermore, 
\begin{compactenum}[a.]
\item $\mc{M}(\psi_\Gamma)\setminus e = \mc{M}(\psi_{\Gamma\setminus e})$ 
\item $\mc{M}(\psi_\Gamma)/e = \mc{M}(\psi_{\Gamma/e}) $
\end{compactenum}
where $\ket{\psi_{\Gamma\setminus e}}$ and $\psi_{\Gamma/e}$ are some surface code states of $\Gamma\setminus e $ and $\Gamma/e$ respectively.
\end{lemma}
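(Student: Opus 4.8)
The plan is to reduce everything to linear algebra on the row space of the defining matrix $S_X=\left[\ba{c}\mathbb{I}_{C(\Gamma^\ast)}\\ \mathbb{I}_{V(\Gamma)}\ea\right]$. Since a vector matroid is unaffected by row operations and by adding or deleting dependent rows, $\mc{M}(\psi_\Gamma)$ depends only on the row space $R\subseteq\F_2^{E(\Gamma)}$ of $S_X$. For a surface code state of any embedded graph $\Gamma'$ this row space has the form $R=B(\Gamma')+W$, where $B(\Gamma')$ is the cut space of $\Gamma'$ (the span of the supports $\delta(v)$ of the vertex operators) and $W$ is the span of the supports of the chosen encoded $X$ operators, so that $B(\Gamma')\subseteq R\subseteq Z((\Gamma')^\ast)$, the cycle space of the dual graph (equivalently, the cocycle space of $\Gamma'$). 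The first fact I would establish --- and the main obstacle, addressed in the last paragraph --- is the converse: every subspace $R'$ with $B(\Gamma')\subseteq R'\subseteq Z((\Gamma')^\ast)$ is the row space of the $S_X$-matrix of some surface code state of $\Gamma'$. Lastly, recall that for the vector matroid of a matrix, deletion of $e$ corresponds on row spaces to the coordinate projection $R\mapsto\pi_e(R)$ forgetting $e$, and, provided $e$ is not a loop of the matroid, contraction of $e$ corresponds to $R\mapsto\{x\in R:x_e=0\}$ read inside $\F_2^{E(\Gamma)\setminus e}$. Since $e$ is a loop of $\mc{M}(\psi_\Gamma)$ precisely when $e$ is a loop of $\Gamma$ (in which case $\Gamma\setminus e=\Gamma/e$ and both identities are trivial), I may assume $e$ is not a loop, and fix the embedding of $\Gamma$, which $\Gamma\setminus e$ and $\Gamma/e$ inherit so that Lemma~\ref{lm:graphOps} applies.

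For part a, I would compute $\pi_e(R)=\pi_e(B(\Gamma))+\pi_e(W)$. The projection of the cut space of $\Gamma$ is the cut space of $\Gamma\setminus e$, and $\pi_e(W)\subseteq\pi_e(Z(\Gamma^\ast))$ lies in the projection of the cycle space of $\Gamma^\ast$, which --- by the standard description of the cycle space under edge contraction together with $\Gamma^\ast/e=(\Gamma\setminus e)^\ast$ from Lemma~\ref{lm:graphOps} --- equals $Z((\Gamma\setminus e)^\ast)$. Hence $B(\Gamma\setminus e)\subseteq\pi_e(R)\subseteq Z((\Gamma\setminus e)^\ast)$, so by the converse above $\pi_e(R)$ is the row space of the $S_X$-matrix of some surface code state of $\Gamma\setminus e$, which gives $\mc{M}(\psi_\Gamma)\setminus e=\mc{M}(\psi_{\Gamma\setminus e})$.

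Part b is analogous. The subspace $\{x\in R:x_e=0\}$ contains $\{x\in B(\Gamma):x_e=0\}$, which is the cut space of $\Gamma/e$ (the cut space of $\Gamma/e$ is exactly the set of cuts of $\Gamma$ that avoid $e$), and it is contained in $\{x\in Z(\Gamma^\ast):x_e=0\}=Z(\Gamma^\ast\setminus e)=Z((\Gamma/e)^\ast)$, again by Lemma~\ref{lm:graphOps}. So it lies between $B(\Gamma/e)$ and $Z((\Gamma/e)^\ast)$ and is therefore the row space of a surface code state of $\Gamma/e$, giving $\mc{M}(\psi_\Gamma)/e=\mc{M}(\psi_{\Gamma/e})$. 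The general statement then follows by induction on the length of a deletion/contraction sequence, using that a minor of a graph minor is again a graph minor; loops, coloops, or parallel edges created along the way cause no trouble, since they merely produce loops, coloops, or parallel elements of the matroids and the computation above goes through verbatim.

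Finally, to justify the converse used above --- the step I expect to be the only genuinely delicate one --- the key point is that for the surface code $Q_{\Gamma'}$ the $X$-type elements of the centralizer $C(S(Q_{\Gamma'}))$ have supports ranging over exactly the cocycle space $Z((\Gamma')^\ast)$ of $\Gamma'$: an $X$-type operator commutes automatically with every site operator, and it commutes with every face operator iff its support meets each face boundary in an even number of edges, i.e., lies in $Z((\Gamma')^\ast)$. So, given $R'$ with $B(\Gamma')\subseteq R'\subseteq Z((\Gamma')^\ast)$, I would pick $w_1,\dots,w_r\in R'$ representing a basis of $R'/B(\Gamma')$, realize each $w_i$ as the support of an $X$-type operator $g_i\in C(S(Q_{\Gamma'}))$, note that $g_i\notin S(Q_{\Gamma'})$ since $w_i\notin B(\Gamma')$, and observe that the $g_i$ pairwise commute and are independent modulo $S(Q_{\Gamma'})$. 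Adjoining $Z$-type encoded operators on the remaining logical qubits completes $\{A_v,B_f,g_1,\dots,g_r\}$ to a CSS surface code state whose $S_X$-matrix has row space $B(\Gamma')+\spn{w_1,\dots,w_r}=R'$, as required.
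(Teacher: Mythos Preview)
Your argument is correct and, in fact, cleaner than the paper's. The paper proves the lemma by direct matrix manipulation: for deletion it simply drops the column $e$ from $\left[\begin{smallmatrix}\mathbb{I}_{C(\Gamma^\ast)}\\ \mathbb{I}_{V(\Gamma)}\end{smallmatrix}\right]$ and checks, row block by row block, that the result is $\left[\begin{smallmatrix}\mathbb{I}_{C((\Gamma\setminus e)^\ast)}\\ \mathbb{I}_{V(\Gamma\setminus e)}\end{smallmatrix}\right]$ via Lemma~\ref{lm:graphOps}; for contraction it row-reduces until a single row carries the $e$-entry, removes that row and the column, and again identifies the two blocks separately (with a case split on whether $e$ is a loop of $\Gamma$). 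You instead abstract away from the specific generating rows and work only with the row space $R$, proving once and for all the characterization ``$R$ is the $S_X$-row space of some CSS surface code state of $\Gamma'$ iff $B(\Gamma')\subseteq R\subseteq Z((\Gamma')^\ast)$''. After that, both minor operations reduce to one-line computations with cut and cycle spaces. This buys you modularity (the converse characterization is of independent interest and could be reused), and it avoids the paper's somewhat ad hoc bookkeeping of which individual cycles survive deletion or contraction. The paper's approach, on the other hand, never needs to invoke the symplectic structure of the logical Pauli group to justify that an arbitrary sandwiched subspace is realizable; it just exhibits the right matrix.

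One small inaccuracy: your biconditional ``$e$ is a loop of $\mc{M}(\psi_\Gamma)$ precisely when $e$ is a loop of $\Gamma$'' holds only in the forward direction. A loop $e$ of $\Gamma$ may still appear in the support of some chosen encoded $\overline{X}_i$, in which case column $e$ of $S_X$ is nonzero and $e$ is not a loop of the matroid. This does not break your proof---your part~b computation goes through verbatim when $e$ is a graph loop that is not a matroid loop, since every cut of $\Gamma$ already avoids $e$ and $\Gamma/e=\Gamma\setminus e$---but the sentence should be weakened to the one implication you actually need.
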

\begin{proof}
Since $M$ is a surface code matroid, there exists a graph  $\Gamma$ embedded on some surface $\Sigma$, such that the surface code 
matroid $\mc{M}(\psi_\Gamma)$ can be represented as 
\be
\mc{M}(\psi_\Gamma)  =\left[ \ba{c}\mathbb{I}_{C(\Gamma^\ast) }  \\ \mathbb{I}_{V(\Gamma)} \ea \right],
\ee
where $C(\Gamma^\ast)$ is some subset of the nontrivial (homological) cycles of $\mathfrak{C}(\Gamma^\ast)$. 
We shall show that both the matroid deletion and contraction operations on $\mc{M}(\psi_\Gamma)$ result in surface code 
matroids. First let us consider the deletion operation. This corresponds to the deletion of a column of $\mc{M}(\psi_\Gamma)$. 
We show that this is equivalent to the deletion of the edge associated to that column. Without loss of generality assume that the
first column is deleted.  Assume that this column is associated to the edge  $e$ in $\Gamma$. 
 Deleting the column corresponding to $e$, the matrix $I_{V(\Gamma)}$ gives the incidence matrix of 
 $\mathbb{I}_{V(\Gamma\setminus e)}$. Deleting an edge in $\Gamma$  corresponds to contracting an edge in $\Gamma^\ast$, see
 Lemma~\ref{lm:graphOps}.
 So any cycle $c\in C(\Gamma^\ast)$ that contains $e$ continues to be a cycle in $\Gamma^\ast/e$ unless $c$ is a coloop. 
If $c$ is a coloop, then contracting $e$ removes the row corresponding to $c$ in $I_{C(\Gamma^\ast)}$
 The cycles in $C(\Gamma^\ast)$ are also in $\Gamma^\ast/e$ which is obtained by contracting the 
edge $e$ in $\Gamma^\ast$. Denote this subset of cycles in $\Gamma^\ast/ e$ by $C(\Gamma^\ast/e)$.
In either case deleting the column in $I_{C(\Gamma^\ast)}$ gives a matrix which is the incidence matrix of cycles in 
$C(\Gamma^\ast/e)$. 
Since $\Gamma^\ast/e = (\Gamma\setminus e)^\ast$,
$\mc{M}(\psi_\Gamma)\setminus e$ can be identified with 
 \be
\mc{M}(\psi_\Gamma)\setminus e  =\left[ \ba{c}\mathbb{I}_{C(\Gamma^\ast/e) }  \\ \mathbb{I}_{V(\Gamma\setminus e)} \ea \right]  
=\left[ \ba{c}\mathbb{I}_{C((\Gamma\setminus e)^\ast) }  \\ \mathbb{I}_{V(\Gamma\setminus e)} \ea \right] 
=\mc{M}(\psi_{\Gamma\setminus e}),
\ee
which shows that the $\mc{M}(\psi_\Gamma)\setminus e$ is a surface code matroid. 

Next let us consider the matroid contraction of $\mc{M}(\psi_{\Gamma})$. This corresponds to a projection of the matrix
representing $\mc{M}(\psi_{\Gamma})$ by removing a column as well as a row.
If $e$ is not a loop, then it is incident on two vertices $u$
and $v$ and the incidence matrix $\mathbb{I}_{V(\Gamma)}$ contains precisely two rows $r_u$ and $r_v$
that have a `1' in the column corresponding to $e$.  We can replace one of the rows say $r_v$ by their sum $r_u+r_v$. 
Then this row corresponds to the incidence vector of the vertex obtained by
contracting along $e$.  The remaining row $r_u$ corresponds a cycle $c$ in $\Gamma^\ast$ . Suppose there is a 
cycle $c'$ in $C(\Gamma^\ast)$ such that it contains $e$, then this cycle can be replaced by another cycle $c''$ that is obtained by the combination of $c$ and $c'$. This does not affect $\mc{M}(\psi_\Gamma)$, therefore we can assume that all the cycles in $C(\Gamma^\ast)$ do not contain $e$. Hence, the
cycles in $C(\Gamma^\ast)$ are also in $\Gamma^\ast \setminus e$ which is obtained by deleting the 
edge $e$ in $\Gamma^\ast$. The matroid minor $\mc{M}(\psi_\Gamma)/e$ is obtained by removing the row $r_u$ in $\mc{M}(\psi_\Gamma)$ and deleting the column corresponding to $e$, which is now all zero except in the row $r_u$, because of the elimination operations performed earlier.

Suppose that $e$ is a loop, then the column corresponding to $e$ in $\mathbb{I}_{V(\Gamma)}$ is an all zero column.
If all the cycles in $C(\Gamma^\ast)$ do not contain $e$, then this column is all zero column and we can simply delete it.
Therefore, we have $\mc{M}(\psi_\Gamma) / e = \mc{M}(\psi_{\Gamma\setminus e}) = \mc{M}(\psi_{\Gamma/ e})  $,
where we used the fact that $\Gamma\setminus e =\Gamma/e $, when $e$ is a loop. On the other hand if some cycle 
$c\in C(\Gamma^\ast)$ contains $e$, then we replace every other cycle $c'$ in $C(\Gamma^\ast)$ that contains $e$
by another cycle $c'' $ such that the row space of $\mathbb{I}_{C(\Gamma^\ast)} $  does not change. At this point only one cycle in $C(\Gamma^\ast)$  contains $e$ . 
Denote these cycles that do not contain $e$ as $C(\Gamma^\ast\setminus e )$. 
We obtain $\mc{M}(\psi_\Gamma)/e$ by removing the row 
corresponding to $c$ and deleting the column corresponding to $e$. 

Whether $e$ is loop or not, on contracting the edge $e$,   $\mathbb{I}_{V(\Gamma)}$  gives $\mathbb{I}_{V(\Gamma/ e)}$, the vertex-edge incidence matrix of $\Gamma/e$, while $\mathbb{I}_{C(\Gamma^\ast)}$ gives 
$\mathbb{I}_{C(\Gamma^\ast\setminus e)}$, cycle-edge incidence matrix of $C(\Gamma^\ast\setminus e)$. Thus the  matroid minor $\mc{M}(\psi_\Gamma)/e$ is given by 
\be
\mc{M}(\psi_\Gamma)/ e  =\left[ \ba{c}\mathbb{I}_{C(\Gamma^\ast\setminus e) }  \\ \mathbb{I}_{V(\Gamma / e)} \ea \right]  
=\left[ \ba{c}\mathbb{I}_{C((\Gamma/ e)^\ast) }  \\ \mathbb{I}_{V(\Gamma/ e)} \ea \right] 
=\mc{M}(\psi_{\Gamma/ e}).
\ee
This completes the proof that the minor of a surface code matroid is also a surface code matroid. 
\end{proof}

A family of matroids $\mc{F}$ is said to be minor closed, if every minor of a matroid in $\mc{F}$  is also in $\mc{F}$. The preceding Lemma, therefore gives us the following result.
\begin{corollary}\label{co:surfMat}
Surface code matroids are minor closed family of matroids. 
\end{corollary}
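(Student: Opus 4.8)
The plan is to derive the corollary directly from Lemma~\ref{lm:minorSurfMatroid} by a straightforward induction on the number of elementary minor operations. Recall that a minor of a matroid $\mc{M}$ is, by definition, any matroid obtained from $\mc{M}$ by a finite sequence of single-element deletions $\setminus e$ and single-element contractions $/e$. So it suffices to show that the class of surface code matroids is closed under each of these two elementary operations and then chain the conclusions.

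First I would record the base step. If $\mc{M}(\psi_\Gamma)$ is a surface code matroid, then by parts (a) and (b) of Lemma~\ref{lm:minorSurfMatroid} we have $\mc{M}(\psi_\Gamma)\setminus e = \mc{M}(\psi_{\Gamma\setminus e})$ and $\mc{M}(\psi_\Gamma)/e = \mc{M}(\psi_{\Gamma/e})$ for suitable surface code states of $\Gamma\setminus e$ and $\Gamma/e$. In both cases the result is again a surface code matroid, now with ground set $E(\Gamma)\setminus\{e\}$ and underlying graph $\Gamma\setminus e$ or $\Gamma/e$; each of these is again a graph embedded on a surface, since edge deletion and edge contraction preserve embeddability. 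Thus the class is closed under a single elementary minor operation.

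For the inductive step, suppose $\mc{N}$ is obtained from $\mc{M}(\psi_\Gamma)$ by a sequence of $r$ elementary operations. Write $\mc{N} = \mc{N}' \bullet e$, where $\bullet \in \{\setminus,\, /\}$ and $\mc{N}'$ is the matroid obtained by the first $r-1$ operations. By the induction hypothesis $\mc{N}' = \mc{M}(\psi_{\Gamma'})$ for some graph $\Gamma'$ embedded on a surface, and applying the base step once more to $\Gamma'$ and the edge $e$ shows that $\mc{N} = \mc{M}(\psi_{\Gamma' \bullet e})$ is a surface code matroid. Hence every minor of a surface code matroid is a surface code matroid, i.e. surface code matroids form a minor closed family. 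There is essentially no obstacle here: all the substantive work — the edge-deletion and edge-contraction bookkeeping for the incidence matrices $\mathbb{I}_{V(\Gamma)}$ and $\mathbb{I}_{C(\Gamma^\ast)}$, together with the duality $\Gamma^\ast/e = (\Gamma\setminus e)^\ast$ — has already been carried out in Lemma~\ref{lm:minorSurfMatroid}, and the only remaining point is the harmless observation that a graph minor of an embeddable graph is embeddable, which is standard.
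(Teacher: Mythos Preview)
Your proof is correct and matches the paper's approach: the paper simply states that the corollary follows from the preceding Lemma~\ref{lm:minorSurfMatroid}, and your argument just spells out the obvious induction on the number of deletion/contraction steps that the paper leaves implicit.
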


Corollary~\ref{co:surfMat} establishes a connection with an important, albeit difficult, problem in matroid theory. A consequence of the structure theory of binary matroids is that any class of minor closed binary matroids is characterized by a finite set of excluded minors \cite{geelen09}. Thus all CSS surface code states can be characterized by a finite set of CSS states. 
However, finding these CSS states or equivalently the excluded minors of the surface code matroids appears to be a difficult task.  Fortunately, for some restricted classes we can do so and derive some useful results.

\subsection{A Class of Surface Code Matroids}
In conjunction with Corollary~\ref{co:surfMat} we can state Theorem~\ref{th:surfaceCode} in a slightly more useful way in that we can eliminate many CSS stabilizer states from  being counterexamples to the LU-LC conjecture. Some of the surface code matroids can be related to well studied classes of
matroids, namely graphic and cographic matroids. 
A binary matroid is graphic if and only if it does not have a minor in the set $ \{F_7, F_7^\ast, \mc{M}^\ast, \mc{M}^\ast(K_5), \mc{M}^\ast(K_{3,3})\}$, where 
$F_7$ and $F_7^\ast$ are the Fano matroid and its dual; $\mc{M}(K_5)$, and $\mc{M}(K_{3,3})$ are the matroids of
the complete graph $K_5$ and complete bipartite graph $K_{3,3}$ while $\mc{M}^\ast(K_5)$, and $\mc{M}^\ast(K_{3,3})$ are their
dual matroids; definitions of these matroids can be found in \cite{oxley92}. 

Let $\Gamma$ be a graph and as before denote by  $\mathfrak{C}(\Gamma)$ the set of all cycles of $\Gamma$ and let
 $\mathbb{I}_{V(\Gamma)}$  be the vertex-edge incidence matrix of $\Gamma$. The rowspace of $I_{V(\Gamma)}$ is called the cut space of $\Gamma$.  The incidence matrix of all the cycles of $\Gamma$  is denoted as 
$\mathfrak{C}(\Gamma)$. The row space of $\mathbb{I}_{\mathfrak{C}(\Gamma)}$ is called the flow space of $\Gamma$. The cut space and the flow space of $\Gamma$ are orthogonal to each other, more precisely $\mathbb{I}_{V(\Gamma)} \mathbb{I}_{\mathfrak{C}(\Gamma)}^t=0$. The supports of site operators form a spanning set for the cut space, while the supports of the face operators and the encoded $Z$-operators are a spanning set for the flow 
space. Since $\mathbb{I}_{V(\Gamma)}$ and $\mathbb{I}_{\mathfrak{C}(\Gamma)}$ are orthogonal we can define a stabilizer code with the stabilizer matrix
\ben
S=\left[\ba{c|c}B(C) &\bf{0}\\\bf{0}& B(F)\ea\right],
\een
where $B(C)$ is a basis of the cut space and $B(F)$ is a basis of the flow space. 
Such stabilizer states are sometimes called cut-flow states. A cut-flow state is precisely the CSS surface code state that is stabilized by 
site operators, the face operators and the encoded $Z$ operators. In other words 
\be
S=\spn{A_v, B_f, \overline{Z}_1, \ldots, \overline{Z}_g \mid v\in V(\Gamma), f\in F(\Gamma) }.
\ee
The key observation is that the vector matroid associated to $\mathbb{I}_{V(\Gamma)}$ is precisely the cycle matroid of $\Gamma$ and denoted as $\mc{M}(\Gamma)$. In other words, $\mc{M}(\Gamma)$ is a surface code matroid that is also a cycle matroid. 
The vector matroid associated to $\mathbb{I}_{\mathfrak{C}(\Gamma)}$ is called the bond matroid of $\Gamma$ and denoted as $\mc{M}^\ast(\Gamma)$. The matroids  $\mc{M}(\Gamma)$ and $\mc{M}^\ast(\Gamma)$ are duals of each other.

Theorem~\ref{th:surfaceCode} implies the following corollary that makes a useful connection to binary matroids. 
\begin{corollary}\label{co:matroidConxn}
Let $C$ be an $[n,k,d\geq 3]_2$  classical code with dual distance $d^\perp \geq 3$.  Let $\ket{C}=\sum_{c\in C} \ket{c}$ be
the CSS stabilizer state derived from $C$ with the stabilizer matrix
$S= \left[\ba{c|c}G&\bf{0}\\\bf{0}& H \ea\right]$, where  $G$  and $H$ are the generator  and parity check matrices of $C$.
If the vector matroid associated to $G$ is either graphic or  cographic, then $\ket{C}$ cannot be a counterexample to the $\lu$-$\lc$ conjecture. 
\end{corollary}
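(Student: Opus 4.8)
The plan is to identify $\ket{C}$ --- or, after applying a Hadamard to every qubit, the state $\ket{C^\perp}=\sum_{x\in C^\perp}\ket{x}$ --- with a cut-flow state of a graph meeting the hypotheses of Theorem~\ref{th:surfaceCode}, and then to invoke that theorem together with the fact that the property $\lu(\cdot)=\lc(\cdot)$ is invariant under local Clifford equivalence (if $\ket{\psi'}=K\ket{\psi}$ with $K\in\mc{K}_n^l$, then $\lu(\psi')=\lu(\psi)$ and $\lc(\psi')=\lc(\psi)$, so one state is a counterexample iff the other is).

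First I would treat the case in which the vector matroid of $G$ is graphic, say it equals $\mc{M}(\Gamma)$ for some graph $\Gamma$; I may assume $\Gamma$ is connected, for otherwise $C$ is the direct sum of the cut spaces of the components of $\Gamma$ on disjoint sets of coordinates, $\ket{C}$ is the corresponding tensor product, and the argument applies factor by factor --- the minimal support condition of Lemma~\ref{lm:msc} is inherited by tensor products because the minimal support elements of a product state are precisely those of its factors. Since binary matroids are uniquely $\F_2$-representable (any two $\F_2$-representations of a binary matroid agree up to row operations and a relabeling of the ground set), the row space of $G$ coincides with the row space of the vertex--edge incidence matrix $\mathbb{I}_{V(\Gamma)}$; in other words, after relabeling coordinates, $C$ is the cut space of $\Gamma$ and hence $C^\perp$ is its flow (cycle) space. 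Embedding $\Gamma$ on a surface of sufficiently large genus, the cut-flow state of $\Gamma$ --- the CSS surface code state stabilized by the site operators, the face operators and the encoded $Z$ operators --- has, as recalled above, $X$-part equal to the cut space of $\Gamma$ and $Z$-part equal to the flow space, hence equals $\ket{C}=\sum_{c\in C}\ket{c}$.

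It then remains to verify that $\Gamma$ satisfies the hypothesis of Theorem~\ref{th:surfaceCode}, namely that it has no cycle or cocycle of length $\leq 2$. The nonzero elements of the cut space of $\Gamma$ are exactly its cocycles, so $d\geq 3$ is precisely the statement that $\Gamma$ has no cocycle of length $\leq 2$ (in particular no coloop); dually, the nonzero elements of the flow space are the cycles of $\Gamma$, so $d^\perp\geq 3$ says $\Gamma$ has no cycle of length $\leq 2$ (in particular no loop). Therefore Theorem~\ref{th:surfaceCode} applies, giving $\lu(\ket{C})=\lc(\ket{C})$, so $\ket{C}$ is not a counterexample to the $\lu$-$\lc$ conjecture.

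Finally, the cographic case reduces to the graphic one. If the vector matroid of $G$ is cographic, then its dual is graphic; but the dual of the vector matroid of $G$ is the vector matroid of the parity-check matrix $H$, so the latter equals $\mc{M}(\Gamma)$ for some $\Gamma$. Applying the argument above to the code $C^\perp$ --- which has $H$ as a generator matrix, $G$ as a parity-check matrix, minimum distance $d^\perp\geq 3$ and dual distance $d\geq 3$ --- shows that $\ket{C^\perp}$ is not a counterexample; since applying a Hadamard to every qubit is a local Clifford sending $\ket{C}$ to a scalar multiple of $\ket{C^\perp}$, the state $\ket{C}$ is not a counterexample either. I expect the main obstacle to be the step that pins $\ket{C}$ down as an honest surface code state: one must remember that $G$ and $\mathbb{I}_{V(\Gamma)}$ represent the same matroid only up to row operations, so it is the code $C$, not the matrix $G$, that is determined by the matroid --- this is exactly where unique $\F_2$-representability of binary matroids is used --- and one must be sure that the cut-flow state, with all its logical degrees of freedom chosen of $Z$-type, really has its $Z$-stabilizer equal to the whole flow space rather than a proper subspace (this follows because a surface code state is a pure stabilizer state, hence has exactly $n$ stabilizer generators, and the cut space and flow space of $\Gamma$ are orthogonal complements).
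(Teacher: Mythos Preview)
Your proposal is correct and follows essentially the same route as the paper: identify $\ket{C}$ with a cut-flow state of a graph $\Gamma$ when the matroid of $G$ is graphic, check that $d\ge 3$ and $d^\perp\ge 3$ rule out short cycles and cocycles, invoke Theorem~\ref{th:surfaceCode}, and then reduce the cographic case to the graphic one via the local Clifford (Hadamard on every qubit) sending $\ket{C}$ to $\ket{C^\perp}$. You are in fact more careful than the paper on two points it leaves implicit --- the use of unique $\F_2$-representability to conclude that the row space of $G$ equals the cut space of $\Gamma$, and the treatment of disconnected $\Gamma$ --- but these are refinements rather than a different strategy.
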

\begin{proof}
We note that the matroids associated to $G$ and $H$ are dual to each other. If the 
vector matroid of  $G$ is graphic, then there exists a graph $\Gamma$ and a surface $\Sigma$ on which $\Gamma$ can be embedded such that the cycle matroid of $\Gamma$ is the vector matroid of $G$. 
Since the bond matroid of $\Gamma$ is the dual of cycle matroid, 
it is given by the vector matroid of $H$. It follows that  $\ket{C}$ is a cut-flow state. 
Since $d\geq 3$ and $d^\perp\geq 3$, it follows that $\Gamma$ does not have cycles or cocycles of length $\leq 2$; thus Theorem~\ref{th:surfaceCode} is applicable to $\ket{C}$.  Therefore, $\lu(\ket{C}) = \lc(\ket{C})$ and $\ket{C}$ 
cannot be a counterexample to the LU-LC conjecture. 

Similarly, if the vector matroid of $G$ is cographic, then vector matroid of $H$ is graphic and  we can conclude that the 
state stabilized by $\left[\ba{c|c}H&\bf{0}\\\bf{0}& G \ea\right]$ cannot be counterexample to the $\lu$-$\lc$ conjecture. 
But this state is local Clifford equivalent to $\ket{C}$, therefore $\ket{C}$ cannot be a counterexample either.
\end{proof}

Corollary~\ref{co:matroidConxn} gives us a very simple test to rule a large class of CSS stabilizer states, in particular all the 
cut-flow states. So to test whether a given code $C$ can give rise to a counterexample we simply have to test if the associated matroid is graphic or cographic. Graphic and cographic matroids are characterized by finite number of excluded minors.   In Corollary~\ref{co:matroidConxn} we test to see if 
 $G$ or $H$ are graphic. This test can be performed in polynomial time for binary matroids, see  \cite{tutte60}.

Further, Corollary~\ref{co:matroidConxn} implies that a CSS counterexample for the LU-LC conjecture must necessarily be induced by a nongraphic and noncographic matroid. 
However, a CSS stabilizer state whose associated matroid is neither graphic nor cographic is not necessarily a
counterexample to the LU-LC conjecture.  An interesting albeit difficult problem would be to find the 
excluded minors of the surface code matroids as it would give a sufficient condition for a general CSS state to be ruled out as a counterexample.  

While these associations with matroids are interesting we note that the matroids whose CSS states are LU-LC equivalent are
not minor closed. However, characterizing the largest class of matroids which are minor closed might be an interesting problem
and provide an insight into those states which preserve this property.

\section{Summary}
In this paper we have 
have given a constructive method to compose new counterexamples for the $\lu$-$\lc$ conjecture.  We have also investigated the equivalence classes of two important classes of stabilizer states--the surface code states and cluster states. This allows us to 
rule out the existence of encoded non-Clifford  transversal gates for surface codes. 
Additionally, we have been able to  make a connection with the theory of binary matroids, opening the possibility to approach this 
topic from a different vantage point. 

\section*{Acknowledgment}
We would like to thank Jim Geelen, and Markus Grassl for useful discussions. Some of the results of this paper have been 
presented at  the Workshop on Applications of Matroid Theory and Combinatorial 
Optimization to Information and Coding Theory, Banff International Research Station, Banff, 2009.
This research is supported by NSERC, CIFAR and 
MITACS.

\bibliographystyle{plain}
\def\cprime{$'$}


\end{document}